\documentclass[11pt]{article}
\usepackage[margin=1in]{geometry}

\usepackage[dvipsnames]{xcolor}
\usepackage{amsmath}
\DeclareMathOperator*{\argmin}{arg\,min}
\usepackage{amsthm}	
\usepackage{amsfonts}
\usepackage{amssymb}
\usepackage{bbm}
\usepackage{mathtools}
\usepackage[skip=8pt,indent=0pt]{parskip}
\usepackage{authblk}
\usepackage{booktabs} 
\usepackage{caption} 
\usepackage{subcaption} 
\usepackage{graphicx}
\usepackage{pgfplots}
\usepackage[all]{nowidow}
\usepackage[utf8]{inputenc}
\usepackage{tikz}
\usepackage{multicol}
\usepackage{algpseudocode,algorithm,algorithmicx}
\usepackage{hyperref}
\usepackage{scalerel}
\usepackage{natbib}
\usepackage[english]{babel}
\setcitestyle{numbers}
\usepackage{float}

\usepackage{yhmath}
\usepackage{cancel}
\usepackage{siunitx}
\usepackage{array}
\usepackage{multirow}
\usepackage{gensymb}
\usepackage{tabularx}
\usepackage{booktabs}
\usepackage{setspace}
\usetikzlibrary{fadings}
\usetikzlibrary{patterns}
\usetikzlibrary{shadows.blur}

\newtheorem{theorem}{Theorem}

\newtheorem{assumption}{Assumption}
\newtheorem{proposition}[theorem]{Proposition}
\newtheorem{remark}{Remark}
\newtheorem{definition}{Definition}
\renewenvironment{proof}[1][\proofname]{{\noindent \bfseries\itshape #1.}}{\qed}

\newcommand{\E}{\mathbb{E}}
\newcommand{\F}{\mathcal{F}}

\newcommand{\R}{\mathbb{R}}

\newcommand{\N}{\mathcal{N}}

\definecolor{Red}{rgb}{1,0,0}

\definecolor{DRed}{rgb}{0,0,0}

\definecolor{Green}{rgb}{0.2,0.5,0.2} 

\definecolor{Blue}{rgb}{0,0,1} 

\definecolor{Orange}{rgb}{0.605,0.311,0.084} 
\definecolor{Black}{rgb}{0,0,0} 
\definecolor{Grey}{rgb}{0.52, 0.52, 0.51} 
\definecolor{Purple}{rgb}{0.333,0.036, 0.631} 
\definecolor{Aqua}{rgb}{0.018,0.592,0.39}

\title{Optimal Size in Market Making under State-Dependent Hawkes Process}
\author[1]{Jonathan Chávez Casillas\thanks{jchavezc@uri.edu}}
\author[2]{Li Gan\thanks{lgan2@uri.edu}}

\affil[1]{Dept. of Mathematics and Applied Mathematical Sciences, University of Rhode Island}
\affil[2]{Dept. of Mathematics and Applied Mathematical Sciences, University of Rhode Island}
\date{Oct. 2026}
\pgfplotsset{compat=1.18}

\begin{document}
\maketitle

\begin{abstract}
We study market making with a parametric model of executed volume that accounts for quote placement and state-dependent order flow. Executed volume is expressed as the product of an average order size, an exponential demand function, and a Prony-series multiplier depending on quote depth, market-order count constructed from a state-dependent Hawkes model. An approximation result for finite Prony sums on a compact state domain provides a mathematical basis for the representation. An empirical study of limit-order-book data examines the calibrated demand and intensity models, sensitivity to inventory restrictions, and the resulting placement policies. In retrospective evaluations conditional on realised order-flow paths, the selected calibration achieves higher terminal utility than a baseline while respecting the prescribed inventory bound. The comparisons also distinguish the contribution of trading revenues from that of inventory control. The resulting closed-form volume specification takes placement levels as inputs and can therefore be used with alternative quoting rules.
\end{abstract}

\section{Introduction} 


Optimal market making in high-frequency limit order book (LOB) trading can be
decomposed into two intimately related sub-problems: \emph{where} to post the
buy- and sell-side limit orders (i.e., the placement levels
$L_{t_k}^+$ and $L_{t_k}^-$ relative to the prevailing best quotes), and
\emph{how many} shares to post on each side (i.e., the executed-volume
processes $Q_{t_{k+1}}^+$ and $Q_{t_{k+1}}^-$). The first question has been
the subject of an extensive literature, beginning with the inventory-control
framework of \cite{hostoll1981}, the influential pricing model of
\cite{avellaneda2008}, the dynamic-programming treatment of
\cite{gueant2013,cartea2015,cartea2015book}, and more recently the
inventory-liquidation-cost extension of \cite{chavez2024adaptive}. By
contrast, the volume dimension---how many shares to commit at each level---has
received comparatively little theoretical attention, even though in practice
market makers calibrate posted size as actively as posted price. This paper, to the best of our knowledge,
try to answer the second question and derive, for the first time, a
closed-form expression for the expected executed volume $Q_{t_{k+1}}^\pm$
under the optimal placement levels $L_{t_k}^\pm$.

\paragraph{Two existing frameworks for size optimization.}
Existing work follows one of two lines. The \emph{absolute}
formulation of \cite{cartea2015book} (Section.10.2.3) posts at the touch
under a fixed half-spread and chooses an integer share count under
Poisson arrivals and exponential fill probability; the value function
reduces to an ODE system. The \emph{relative} formulation of
\cite{avellaneda2008}, extended by Fushimi, Gonz\'alez Rojas, and
Herman (2024), computes optimal depths from a reservation-price
calculation and applies an inventory-dependent skew to bid/ask
volumes with soft inventory bounds. We build on the absolute formulation,
which isolates the volume decision cleanly.

\paragraph{Approach.}
A standard route to closed forms in point-process control is to
solve a partial integro-differential equation with explicit boundary
conditions --- e.g.\ Theorem~3.1 of \cite{dassios2011}. In our
sdHawkes setting this is not tractable: the intensity depends on the
full event history, the LOB state is discrete and high-dimensional,
and $Q_{t_{k+1}}^\pm$ depends counterfactually on $L_{t_k}^\pm$. We
proceed constructively: postulate a parametric form for the size
factor, establish well-posedness in a suitable Banach space, and
calibrate by closed-loop policy optimization.
Order-flow arrivals are modelled by a state-dependent Hawkes
(sdHawkes) process \cite{morariu2022state}, generalizing
\cite{hawkes1971} by letting the excitation kernel depend on a
discrete LOB state $X(t)$ (here, queue imbalance); this captures the
self- and cross-excitation of \cite{bacry2015,bacry2016first} and
the state-dependence of intensities that Poisson models miss.

\paragraph{Our contributions.}
Within this framework the paper makes five contributions.
\begin{enumerate}
\item \textbf{Closed-form executed volume.} We derive the explicit
representation of optimal order size formular. To our knowledge this is the first closed-form expression for
expected size under sdHawkes-driven order flow.

\item \textbf{Banach-space existence and Prony approximation.}
Section~2.4 shows that $f_N^{\pm}$ belongs to a complete normed
space of bounded continuous functions on the relevant domain and
admits a uniform approximation by the low-order Prony series
\[
f_N^{\pm}(L,N,\lambda) \;=\;
\sum_{i=1}^{M^{\pm}} w_i^{\pm}\,
\exp\!\big(\beta_{i1}^{\pm} L + \beta_{i2}^{\pm} N + \beta_{i3}^{\pm} \lambda\big),
\]
truncated at $M^{\pm}\!\le\!2$ empirically.

\item \textbf{Exponential demand function.} We fix the linear
demand of \cite{chavez2024adaptive} by $P_N^{\pm}(L) = A_N^{\pm}
e^{-B_N^{\pm} L} + C_N^{\pm}$, fitted block-wise in $N$, which matches the rapid decay of fill probability with quote depth seen in LOB data.

\item \textbf{Closed-loop calibration.} Because $f_N^{\pm}$ is
structural, its parameters $\theta=\{w_i^{\pm},\beta_{ij}^{\pm}\}$
cannot be recovered by regression. We treat $\theta$ as a policy
parameter and calibrate it by direct maximisation of
$\E[W_T+S_T I_T-\gamma I_T^{2}\mid\F_0,I_0=0]$ (with $\gamma>0$ the
inventory penalty, denoted $\gamma$ to avoid clash with the sdHawkes
intensity $\lambda$). The inner evaluation is a backward DP under
the Cartea ansatz ; the outer optimisation is Covariance-Matrix Adaptation Evolution Strategy (CMA-ES).
\end{enumerate}


\subsection*{Relevant works}
The closed-form size formula draws on, and extends, several strands
of the high-frequency-finance and applied probability literatures
beyond the optimal-placement papers reviewed in Section~1.1.

\paragraph{Optimal execution and inventory-penalty control.}
The dynamic-programming machinery and the quadratic
inventory-penalty term originate in the optimal-execution literature
initiated by~\cite{almgren2001} under linear permanent and temporary
price impact; subsequent extensions to stochastic impact and to
general risk-aversion functionals \cite{cartea2015book} rely on the
same revenue--inventory-risk trade-off that drives our objective
$\E[W_T + S_T I_T - \gamma I_T^{2}]$. Our specification differs in
two ways: the problem is genuinely two-sided (both bid and ask
quotes are controls), and feasibility is enforced by a hard barrier
$|I_{t_k}|\le I_{\max}$ in addition to the soft quadratic penalty.
The hard barrier turns the backward recursion into a constrained
dynamic programme whose sensitivity properties are examined in
Section~5.4.

\paragraph{Self-exciting point processes in market microstructure.}
Beyond the foundational works \cite{hawkes1971,bacry2015,bacry2016first}
cited in Section~1.1, \cite{bowsher2007} introduced multivariate
Hawkes specifications for the joint dynamics of trades and quote
revisions, the marked extension we use is due to
\cite{embrechts2011}, and \cite{filimonov2012} document near-critical
regimes in empirical equity-trade Hawkes calibrations and warn
against the spurious stability artefacts they induce. Our sdHawkes
calibration operates well inside the sub-critical
region, with the largest self-excitation ratio across the twelve
event--state pairs bounded by $0.42$, so the stationarity and
ergodicity conditions of the Banach-space argument
hold throughout the empirical window.

\paragraph{LOB queue dynamics and order-flow imbalance.}
A parallel strand of the LOB-modelling literature treats the bid-ask
queues as primitive state variables: \cite{cont2010queue} derives
diffusion limits under a birth-death-process specification,
\cite{cont2014ofi} documents a linear relation between net
order-flow imbalance and short-horizon mid-price moves, and
\cite{lipton2014trade,stoikov2018microprice} extract closed-form
trade-arrival and microprice formulas from queue snapshots. The queue-imbalance state $X(t_k)$ conditioning our sdHawkes excitation
kernel is the discrete-state analogue of the
order-flow-imbalance variable of these papers; its empirical efficacy as a predictor of short-horizon market-order direction is what motivates its inclusion in our state space.

\subsection*{Outline of this paper}

The rest of this paper organized as follows: Section~2 specifies the order-flow, model construction and the approximation result underlying the Prony representation. The optimal-placement problem and the calibration of policy parameters are estabilished in Section~3. Section~4 describes the estimation of the sdHawkes parameters. The empirical and other related results for MSFT are reported in Section~5, with particular attention to inventory-penalty sensitivity and out-of-sample strategy comparisons. Section~6 discusses the findings and future works.

\section{The model} 

\subsection{Time  framework}\label{sec:time_framework}

Fix a filtered probability space $(\Omega,\mathcal{F},(\mathcal{F}_t)_{t\ge 0},\mathbb{P})$ satisfying the usual conditions, and a finite trading horizon $T>0$. All order-flow processes introduced in the sequel are $(\mathcal{F}_t)$-adapted. The empirical work takes $T=23{,}400$ seconds, corresponding to NASDAQ trading session (09:30--16:00).

The model operates on two nested time scales.

\paragraph{Physical (clock) time.}
We discretise the trading session on the uniform second-grid
\[
0 = t_0 < t_1 < \cdots < t_n = T, \qquad t_{k+1}-t_k = \Delta t = 1\text{ s}, \qquad n = T/\Delta t = 23{,}400.
\]
Market-order flow observed during $(t_k,t_{k+1}]$ is used to fit the per-second demand function that governs limit-order placement at $t_{k+1}$; placement decisions are therefore taken at the second boundaries $\{t_k\}_{k=0}^{n-1}$.

\paragraph{Event time.}
Within each second-interval $(t_k,t_{k+1}]$, market orders arrive at random jump times
\[
t_k \;<\; \tau_{k,1} \;<\; \tau_{k,2} \;<\; \cdots \;<\; \tau_{k,N_k} \;\le\; t_{k+1},
\]
where $N_k$ is the (random) market-order count in the interval. Concatenating $\{\tau_{k,i}\}$ across $k$ in chronological order yields the event-time sequence $\{T_n\}_{n\ge 1}$ driving the sdHawkes process defined in Section~\ref{sec:sdhawkes}.

\paragraph{Regularity of the event times.}
Two technical properties of the event times are used implicitly in later arguments.
\begin{enumerate}
\item \textit{Total inaccessibility.} Each $\tau_{k,i}$ is a totally inaccessible $(\mathcal{F}_t)$-stopping time: it cannot be announced by any predictable sequence of stopping times. This is the standard regularity condition for orderly point-process arrivals and rules out coincidences of an event time with a predictable announcement (e.g.\ a scheduled clock tick).
\item \textit{Non-explosion on $[0,T]$.} With probability one, $N_k<\infty$ for every $k$ and $\sum_{k=0}^{n-1} N_k<\infty$; equivalently, the sdHawkes process is non-explosive on the bounded horizon $[0,T]$. Under the exponential-kernel, this follows from the sub-critical branching-ratio condition.
\end{enumerate}
Together, total inaccessibility and non-explosion guarantee existence and uniqueness of the sdHawkes intensity~\eqref{eq:sdhawkes_intensity} on $[0,T]$ and the dynamic-programming value functions constructed in Section~3.


\subsection{State-dependent Hawkes Process}\label{sec:sdhawkes}

We model order-flow arrivals by a multivariate marked
state-dependent Hawkes (sdHawkes) process following
\cite{morariu2022state}. We first define the event and state
processes and their conditional intensities, then give the
volume-marked specification and the discrete representation used
for calibration.

\paragraph{Event and state processes.}
Let $\mathcal{E}$ be a finite set of event types and
$\mathcal{X}$ a finite state space. Using the event times defined in
Section~\ref{sec:time_framework}, write
$(T_n,E_n)_{n\ge1}$ for the ordered sequence of events, where
$T_n$ is the time of the $n$-th event and $E_n\in\mathcal{E}$
its type. The counting process for type $e$ is

\begin{equation}\label{eq:counting_process}
N_e(t) \;=\; \sum_{n\ge 1} \mathbf{1}_{\{T_n\le t,\, E_n = e\}}.
\end{equation}

The $\mathcal{X}$-valued state process $X(t)$ is c\`adl\`ag
and piecewise constant, jumping only at event times. Its transition
kernel is

\begin{equation}\label{eq:transition_kernel}
\phi_e(x',x) \;:=\; \mathbb{P}\!\left[X(T_n)=x \,\big|\, E_n=e,\, X(T_n^{-})=x',\, \mathcal{F}_{T_n^{-}}\right],
\qquad \sum_{x\in\mathcal{X}} \phi_e(x',x)=1.
\end{equation}

where $x'$ and $x$ are the pre- and post-event states,
respectively. The sigma-algebra
$\mathcal{F}_{T_n^-}=\bigvee_{\varepsilon>0}\mathcal{F}_{T_n-\varepsilon}$
contains the information available immediately before the event.

\paragraph{Excitation kernels and conditional intensities.}
We use exponential excitation kernels

\begin{equation}\label{eq:exp_kernel}
k_{e'xe}(t) \;=\; \alpha_{e'xe}\,e^{-\beta_{e'xe}\,t}, \qquad \alpha_{e'xe}\ge 0,\; \beta_{e'xe}>0,
\end{equation}

where $e'$ denotes the source event type, $e$ the target type,
and $x$ the state immediately before the source event.
The coefficient $\alpha_{e'xe}$ determines the excitation
magnitude and $\beta_{e'xe}$ its decay rate. The conditional
intensity of $N_e$, with respect to the joint history of $(N,X)$,
is

\begin{equation}\label{eq:sdhawkes_intensity}
\lambda_e(t) \;=\; \nu_e \;+\; \sum_{e'\in\mathcal{E}} \int_{(0,t)} k_{e'\,X(s^{-})\,e}(t-s)\, dN_{e'}(s),
\end{equation}

where $\nu_e>0$ is the exogenous baseline intensity.
Thus $k_{e'xe}(t-s)$ gives the contribution at lag $t-s$ of a
source event at time $s$ to the conditional arrival rate of
target events.

Under the exponential-kernel specification~\eqref{eq:exp_kernel}, the state-augmented triple $(N_t,\lambda_t,X_t)$ has the Markov property: the intensity $\lambda_t$ evolves via a state-dependent linear SDE driven by the counting process, $X_t$ jumps at event times according to $\phi_e$, and $N_t$ counts arrivals at rate $\lambda_t$; consequently the future law depends on the past only through the current triple. A rigorous proof is given in \cite{morariu2022state}, Proposition~2.1.

\paragraph{Volume-marked specification.}
The raw order events are market buys (MB), which consume ask
liquidity; market sells (MS), which consume bid liquidity; limit
buys (LB), which add bid depth; and limit sells (LS), which add
ask depth. Cancellation events are omitted for tractability.
For market-order targets $e\in\{\mathrm{MB},\mathrm{MS}\}$
with volume mark $z\in\mathbb{R}_+$, the marked intensity is

\begin{equation}\label{eq:marked_intensity}
\lambda_e(t,z) \;=\; g(z;\eta_e)\!\left(\nu_e + \sum_{e'\in\{\text{MB},\text{MS},\text{LB},\text{LS}\}} \int_{[0,t)} \!\!z'\cdot\alpha_{e'\,X(s^{-})\,e}\,e^{-\beta_{e'\,X(s^{-})\,e}(t-s)}\, dN_{e'}(s,z')\right),
\end{equation}

Here $g(\,\cdot\,;\eta_e)$ is the volume-mark density with
parameter $\eta_e$, and $N_{e'}(s,z')$ denotes the marked counting
measure for source type $e'$. Each historical contribution is
scaled by the source event's volume $z'$.

\paragraph{Discrete event and state spaces.}
For calibration, the event types encode the market-order count
and direction on the one-second grid. We partition
$N=N_{t_{k+1}}-N_{t_k}$ into the six bins

\[
\mathcal{B} \;=\; \{\,0,\; [1,3],\; [4,6],\; [7,9],\; [10,12],\; 13^{+}\,\},
\]

and combine them with the two market-order sides, giving
$\mathcal{E}=\mathcal{B}\times\{\text{bid},\text{ask}\}$
with twelve event types.

The state is defined by the queue imbalance

\begin{equation}\label{eq:qi_def}
\mathrm{QI}(t_k) \;=\; \frac{D_b(l,t_k)-D_a(l,t_k)}{D_b(l,t_k)+D_a(l,t_k)},
\end{equation}

where $D_b(l,t_k)$ and $D_a(l,t_k)$ are the bid- and ask-side
depths at level $l$; we use $l=1$ in the empirical application.
Discretizing the imbalance gives the five-state space

\[
\mathcal{X} \;=\; \{\text{sell}_{++},\; \text{sell}_{+},\; \text{neutral},\; \text{buy}_{+},\; \text{buy}_{++}\}.
\]


\subsection{Parametric Market Order Demand Function by sdHawkes}

This is the novel part of this paper, also the first time a parametric model of MO Demand function introduced. The parametric demand function represents the \textit{ unsatisfied demand for market order} at different price levels in the limit order book. This concept extends traditional microeconomic demand theory to high-frequency financial markets, where demand manifests itself as aggressive market orders that consume available liquidity. The demand function captures:
\begin{itemize}
    \item \textbf{Buy-side demand}: The excess of market buy orders over available ask-side liquidity at each price level
    \item \textbf{Sell-side demand}: The excess of market sell orders over available bid-side liquidity at each price level
    \item \textbf{Step-function structure}: Demand decreases with price level due to the discrete nature of limit order books.
\end{itemize}
The arrival of MO can be estimated by a sdHawkes process, suppose we have $N = N_e(t_{k+1}) - N_e(t_k)$ arrived market orders during $[t_k, t_{k+1}]$, we fixed N and for every group of N, we computed the probability executed at every single level l. Then we can estimate the distribution density of all arrival of market orders $f_k{(N,l)}$ during $[t_k, t_{k+1}]$, and this distribution for every $k$ second is different. We have the following assumptions: 

\begin{assumption}
The market order counts $N_{t_{k}}^\pm$ are random variables with conditional distributions given $\mathcal{F}_{t_k}$ known (via Hawkes process prediction).
\end{assumption}

\begin{assumption}
The fundamental price process $\{S_{t_k}\}$ is adapted and conditionally independent of $N_{t_{k+1}}^\pm$ given $\mathcal{F}_{t_k}$.
\end{assumption}

\begin{assumption}
The random coefficients are conditionally independent (given $\mathcal{F}_{t_k}$) across buy and sell sides, though the model can be extended to correlated cases.
\end{assumption} 

\noindent Compared to the linear demand model in our last paper, here we define $P_N^\pm(L^\pm)$ to be the exponential demand function:
\[
P_N^\pm(L^\pm) = A^\pm e^{B^\pm L^\pm} +C^\pm
\] given N.


\subsection{Parametric optimal size function}

We model the optimal size function $Q^{\pm}(L,N,\lambda)$ in the
multiplicative form
\begin{equation}\label{eq:Q_form}
Q_{t_{k+1}}^{\pm}
\;=\;
q_{\text{avg}}^{\pm}\cdot
f_N^{\pm}\!\big(L_{t_k}^{\pm},\,N_{t_k}^{\pm},\,\lambda_{t_k}^{\pm}\big)\cdot
P_N^{\pm}\!\big(L_{t_k}^{\pm}\big),
\end{equation}
where $q_{\text{avg}}^{\pm}$ is the sample-average per-event size,
$P_N^{\pm}(L)=A_N^{\pm}e^{-B_N^{\pm}L}+C_N^{\pm}$ is the exponential
demand function, and $f_N^{\pm}$ encodes
the residual dependence of the executed volume on the joint state
$(L,N,\lambda)$. We parameterise $f_N^{\pm}$ as a Prony series
\begin{equation}\label{eq:prony_def}
f_N^{\pm}(L,N,\lambda)
\;=\;
\sum_{i=1}^{M^{\pm}} w_i^{\pm}
\exp\!\big(\beta_{i1}^{\pm}L+\beta_{i2}^{\pm}N+\beta_{i3}^{\pm}\lambda\big),
\qquad M^{\pm}\in\N,\;w_i^{\pm},\beta_{ij}^{\pm}\in\R,
\end{equation}
truncated at $M^{\pm}\le 2$ in the empirical work. Substitution
into~\eqref{eq:Q_form} yields the explicit closed form
\begin{equation}\label{eq:Q_closed_form_generic}
Q_{t_{k+1}}^{\pm}
=q_{\text{avg}}^{\pm}\!
\Bigg[\sum_{i=1}^{M^{\pm}}w_i^{\pm}
\exp\!\big(\beta_{i1}^{\pm}L_{t_k}^{\pm}+\beta_{i2}^{\pm}N_{t_k}^{\pm}+\beta_{i3}^{\pm}\lambda_{t_k}^{\pm}\big)\Bigg]\!
\big(A_N^{\pm}e^{-B_N^{\pm}L_{t_k}^{\pm}}+C_N^{\pm}\big).
\end{equation}

\noindent The remainder of this subsection justifies the ansatz~\eqref{eq:prony_def}: the Prony class is uniformly dense in the Banach space of continuous functions on the natural state domain of $f_N^{\pm}$, so any continuous optimal size function admits arbitrarily accurate approximation by a finite Prony sum. The argument is Stone--Weierstrass on a compact Hausdorff product space.

\subsubsection*{State domain and continuous-function space}

Collect the state variables into $\mathbf{x}=(L,N,\lambda)\in\Omega:=\mathcal{L}\times\mathcal{N}\times\Lambda$, with $\mathcal{L}=\{1,\ldots,L_{\max}\}$ and $\mathcal{N}=\{1,\ldots,N_{\max}\}$ carrying the discrete topology and $\Lambda=[\lambda_{\min},\lambda_{\max}]\subset\R_{>0}$ carrying the Euclidean topology, with $\lambda_{\min},\lambda_{\max}$ chosen to bracket the empirical envelope of $\{\lambda_{t_k}^{\pm}\}$ on the calibration window. As a finite product of compact Hausdorff spaces, $\Omega$ is itself compact and Hausdorff (Tychonoff). The continuous-function space $C(\Omega)=\{f:\Omega\to\R\text{ continuous}\}$ under the sup norm $\|f\|_{\infty}=\sup_{\Omega}|f|$ is a Banach space by the standard completeness argument on compact spaces.

\begin{remark}[Continuity with discrete coordinates]\label{rem:disc}
Because $\mathcal{L}$ and $\mathcal{N}$ carry the discrete topology, continuity on $\Omega$ reduces to slicewise continuity in $\lambda$: $f:\Omega\to\R$ is continuous iff for every $(L,N)\in\mathcal{L}\times\mathcal{N}$ the section $\lambda\mapsto f(L,N,\lambda)$ is continuous on $\Lambda$. No continuity is required across distinct $(L,N)$ pairs, consistent with the economic intuition that the executed volume may jump as the placement depth $L$ or the count $N$ changes by an integer unit while remaining smooth in the latent intensity~$\lambda$.
\end{remark}

\subsubsection*{Prony class and Stone--Weierstrass density}

\begin{definition}[Prony class]\label{def:prony_class}
The \emph{Prony class} on $\Omega$ is
\begin{equation}\label{eq:prony_class}
\Phi\;:=\;\Bigg\{\phi:\Omega\to\R\;\Big|\;
\phi(\mathbf{x})=\sum_{i=1}^{M}w_i\exp(\langle\boldsymbol{\beta}_i,\mathbf{x}\rangle),\;
M\in\N,\;w_i\in\R,\;\boldsymbol{\beta}_i\in\R^{3}\Bigg\},
\end{equation}
with $\langle\boldsymbol{\beta},\mathbf{x}\rangle=\beta_1L+\beta_2N+\beta_3\lambda$; the ansatz~\eqref{eq:prony_def} for $f_N^{\pm}$ is the requirement $f_N^{\pm}\in\Phi$ with truncation order $M^{\pm}$.
\end{definition}

Each generator $\exp\langle\boldsymbol{\beta},\cdot\rangle$ is continuous on $\Omega$, so $\Phi\subset C(\Omega)$. $\Phi$ is a subalgebra: addition is by concatenation of terms; multiplication combines exponents, $\exp\langle\boldsymbol{\beta},\cdot\rangle\,\exp\langle\boldsymbol{\gamma},\cdot\rangle=\exp\langle\boldsymbol{\beta}+\boldsymbol{\gamma},\cdot\rangle$, so $\phi\psi\in\Phi$; and scalar multiplication is by $w_i\mapsto cw_i$.

\begin{theorem}[Density of Prony series in $C(\Omega)$]\label{thm:dense}
$\Phi$ is dense in $(C(\Omega),\|\cdot\|_{\infty})$: for every $f\in C(\Omega)$ and $\varepsilon>0$ there exist $M\in\N$, $\{w_i\}_{i=1}^{M}\subset\R$, and $\{\boldsymbol{\beta}_i\}_{i=1}^{M}\subset\R^{3}$ with $\big\|f-\sum_{i=1}^{M} w_i\exp\langle\boldsymbol{\beta}_i,\cdot\rangle\big\|_{\infty}<\varepsilon$.
\end{theorem}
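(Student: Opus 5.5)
The plan is to apply the real Stone--Weierstrass theorem on the compact Hausdorff space $\Omega$. Compactness and the Hausdorff property of $\Omega$ were established above via Tychonoff. The text preceding the theorem also shows that $\Phi\subset C(\Omega)$ is closed under addition, multiplication and scalar multiplication, so $\Phi$ is a subalgebra. Two hypotheses therefore remain: (i) $\Phi$ contains the constants, or at least vanishes at no point; and (ii) $\Phi$ separates points of $\Omega$. Once these hold, Stone--Weierstrass gives $\overline{\Phi}^{\|\cdot\|_\infty}=C(\Omega)$. Unwinding the definition of closure then yields, for each $f$ and $\varepsilon>0$, a finite sum $\sum_{i=1}^M w_i\exp\langle\boldsymbol\beta_i,\cdot\rangle$ within $\varepsilon$ of $f$, which is exactly the statement of Theorem~\ref{thm:dense}.

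For (i), take $\boldsymbol\beta=\mathbf 0$. Then $\exp\langle\mathbf 0,\mathbf x\rangle\equiv 1$, so every constant function $c$ lies in $\Phi$ (use $M=1$, $w_1=c$). In particular $\Phi$ vanishes at no point.

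For (ii), let $\mathbf x=(L,N,\lambda)\neq\mathbf x'=(L',N',\lambda')$ in $\Omega$. Some coordinate differs, say coordinate $j$. Choose $\boldsymbol\beta=\mathbf e_j$, so the generator is $\phi(\mathbf y)=\exp(y_j)$. Since $t\mapsto e^t$ is strictly increasing on $\R$ and $y_j\ne y'_j$, we get $\phi(\mathbf x)\neq\phi(\mathbf x')$. This handles the discrete coordinates $L,N$ and the continuous coordinate $\lambda$ uniformly, because $\Omega$ is embedded in $\R^3$ as a subset and $\langle\boldsymbol\beta,\cdot\rangle$ is well defined there. Continuity of each generator on $\Omega$ follows from Remark~\ref{rem:disc}: on each slice $(L,N)$ it is the continuous map $\lambda\mapsto e^{\beta_1L+\beta_2N+\beta_3\lambda}$.

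I do not expect a genuine obstacle. The step needing the most care is verifying the hypotheses in the correct topology: one must check that the compact Hausdorff structure coming from the discrete $\mathcal L,\mathcal N$ times the interval $\Lambda$ is what Stone--Weierstrass needs. Here $\Omega$ is simply a finite disjoint union of compact intervals, and a function is continuous iff it is continuous on each slice, so the theorem applies verbatim. An alternative I would mention as a sanity check avoids Stone--Weierstrass on the product. On each slice, the functions $\sum_i w_i e^{\beta_{3,i}\lambda}$ already form a point-separating algebra containing constants on $\Lambda$, hence are dense in $C(\Lambda)$. One then glues the slices together using finitely many Prony-class functions that interpolate the indicators of points of $\mathcal L\times\mathcal N$, using polynomials in $e^{L}$ and $e^{N}$ via Lagrange interpolation on finite sets. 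This gives an explicit, if crude, construction of the approximants.
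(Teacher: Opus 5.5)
Your proposal is correct and matches the paper's proof. You apply Stone--Weierstrass on the compact Hausdorff space $\Omega$, take the subalgebra property from the preceding discussion, use $\boldsymbol\beta=\mathbf 0$ to get the constants (non-vanishing), and use the coordinate generators $\exp(x_j)$, i.e.\ $\boldsymbol\beta=\mathbf e_j$, to separate points; your slicewise Lagrange-interpolation construction is a valid extra check but is not needed.
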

\begin{proof}
The Stone--Weierstrass theorem \citep{rudin1991functional} states that $\mathcal{A}\subset C(X,\R)$ is uniformly dense on a compact Hausdorff $X$ provided (i)~$\mathcal{A}$ separates points, (ii)~$\mathcal{A}$ does not vanish identically at any point, and (iii)~$\mathcal{A}$ is a subalgebra. $\Omega$ is compact Hausdorff and $\Phi$ satisfies (iii) by the discussion above. \emph{(i) Separation.} Given distinct $\mathbf{x},\mathbf{y}\in\Omega$, at least one coordinate differs; along that coordinate the single Prony generator $\exp(x_j)$ (i.e.\ $\boldsymbol{\beta}=e_j$ the corresponding standard basis vector, $w=1$, $M=1$) satisfies $\exp(x_j)\neq\exp(y_j)$ by injectivity of the exponential. \emph{(ii) Non-vanishing.} Taking $M=1$, $w_1=1$, $\boldsymbol{\beta}_1=\mathbf{0}$ gives the constant function $\mathbf{1}\in\Phi$. Stone--Weierstrass then applies and yields the claimed density.
\end{proof}

\subsubsection*{Application to the optimal size function}

The true optimal size factor $f_N^{\pm,\star}$ is continuous on $\Omega$ under the regularity conditions of the dynamic-programming framework: it arises from a backward recursion whose ingredients (the sdHawkes intensity $\lambda_{t_k}^{\pm}$, the count expectation $\E[N_{t_k}^{\pm}\mid\F_{t_k}]$, and the demand function $P_N^{\pm}$) are continuous in their arguments. Theorem~\ref{thm:dense} then guarantees that for every $\varepsilon>0$ there exists a Prony approximation $\hat{f}_N^{\pm}\in\Phi$ with $\|f_N^{\pm,\star}-\hat{f}_N^{\pm}\|_{\infty}<\varepsilon$, so the performance gap of the closed-form policy vanishes as $\varepsilon\to 0$. In the empirical work of Section~5.4 the low truncation $M^{\pm}\in\{1,2\}$ already delivers out-of-sample utility within a few percent of the best value achieved on the MSFT sample; we therefore fix $M^{\pm}=2$ throughout, keeping the parameter dimension at $16$ and the model parsimony consistent with the size of the training sample $\mathcal{D}_{\text{train}}$.


\section{Value Function}

\subsection{Dynamics, Value Function, and Backward Recursion}
Cash and inventory evolve according to

\begin{equation}\label{eq:cash_dynamics}
W_{t_{k+1}} = W_{t_k} + (S_{t_k} + L_{t_k}^+) Q_{t_{k+1}}^+ - (S_{t_k} - L_{t_k}^-) Q_{t_{k+1}}^-
\end{equation}

\begin{equation}\label{eq:inventory_dynamics}
I_{t_{k+1}} = I_{t_k} - Q_{t_{k+1}}^+ + Q_{t_{k+1}}^-
\end{equation}

where $Q_{t_{k+1}}^\pm$ is given by the closed-form
expression~\eqref{eq:Q_closed_form_generic} defined in Section~2.4.
For the admissible placement set $\mathcal{A}=\{1,2,3,4,5\}^2$,
the value function is

\begin{equation}\label{eq:value_objective}
V_{t_k} = \sup_{(L^+, L^-) \in\mathcal{A}} \mathbb{E}[W_T + S_T I_T - \lambda I_T^2 | \mathcal{F}_{t_k}].
\end{equation}

where $\lambda>0$ penalizes terminal inventory.
The dynamic programming principle gives

\begin{equation}\label{eq:bellman_principle}
V_{t_k} = \sup_{(L^+_{t_k}, L^-_{t_k}) \in\mathcal{A}} \mathbb{E}[V_{t_{k+1}} | \mathcal{F}_{t_k}].
\end{equation}

Following \cite[\S 10.2.3]{cartea2015book}, the Cartea ansatz
reduces the recursion from wealth, price, inventory, and time to
inventory and time alone.

\begin{proposition}[Cartea ansatz and dimension reduction]\label{prop:cartea}
Assume that
\begin{enumerate}
\item[(i)] the mid-price $\{S_{t_k}\}$ is an $\mathcal{F}_{t_k}$-martingale: $\mathbb{E}[S_{t_{k+1}}|\mathcal{F}_{t_k}] = S_{t_k}$;
\item[(ii)] the executed volumes $Q_{t_{k+1}}^\pm$ are $\mathcal{F}_{t_k}$-measurable, which holds under the closed-form specification of Section~2.4.
\end{enumerate}
Then the value function admits the separable form
\begin{equation}\label{eq:cartea_ansatz}
V(t_k, W_{t_k}, S_{t_k}, I_{t_k}) = W_{t_k} + S_{t_k} I_{t_k} + h(t_k, I_{t_k}),
\end{equation}
where the auxiliary function $h(t_k, I)$ depends only on time and inventory and satisfies the two-dimensional backward recursion
\begin{equation}\label{eq:h_recursion}
h(t_k, I) = \sup_{(L_{t_k}^+, L_{t_k}^-) \in \mathcal{A}} \Big\{ L_{t_k}^+ Q_{t_{k+1}}^+ + L_{t_k}^- Q_{t_{k+1}}^- + h\big(t_{k+1},\, I - Q_{t_{k+1}}^+ + Q_{t_{k+1}}^-\big) \Big\},
\end{equation}
with terminal condition $h(T, I) = -\lambda I^2$.
\end{proposition}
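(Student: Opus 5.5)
The proof goes by backward induction on $k$, from $t_n=T$ down to $t_0$. The inductive hypothesis $\mathcal{H}_k$ is that $V(t_k,W,S,I)=W+SI+h(t_k,I)$ for all $(W,S,I)$, with $h(t_k,\cdot)$ a deterministic function of inventory only. The base case $k=n$ follows from the terminal payoff in \eqref{eq:value_objective}: $V(T,W,S,I)=W+SI-\lambda I^2$, so $h(T,I)=-\lambda I^2$.

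For the inductive step, assume $\mathcal{H}_{k+1}$. Substitute the dynamics \eqref{eq:cash_dynamics}--\eqref{eq:inventory_dynamics} into the right-hand side of the dynamic programming principle \eqref{eq:bellman_principle}. With $I'=I-Q^+_{t_{k+1}}+Q^-_{t_{k+1}}$ this gives
\[
V_{t_k}=\sup_{(L^+,L^-)\in\mathcal{A}}\mathbb{E}\Big[W_{t_k}+(S_{t_k}+L^+)Q^+_{t_{k+1}}-(S_{t_k}-L^-)Q^-_{t_{k+1}}+S_{t_{k+1}}I'+h(t_{k+1},I')\,\Big|\,\mathcal{F}_{t_k}\Big].
\]
By hypothesis (ii), $Q^\pm_{t_{k+1}}$, and hence $I'$ and $h(t_{k+1},I')$, are $\mathcal{F}_{t_k}$-measurable. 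They can therefore be pulled out of the conditional expectation. The only remaining random term is $S_{t_{k+1}}I'$, and by the martingale hypothesis (i) its conditional expectation is $S_{t_k}I'$. The $S_{t_k}$-terms then collect as
\[
S_{t_k}Q^+-S_{t_k}Q^-+S_{t_k}(I-Q^++Q^-)=S_{t_k}I.
\]
Because this term does not depend on the control, $W_{t_k}+S_{t_k}I$ factors out of the supremum. What remains is exactly the bracket in \eqref{eq:h_recursion}, which establishes $\mathcal{H}_k$.

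The main obstacle is justifying that the quantity inside the supremum depends on $(W,S)$ only through the additive term, so that $h(t_k,\cdot)$ is genuinely a function of $(t_k,I)$ alone. The difficulty is that $Q^\pm_{t_{k+1}}$ depend on $N_{t_k}^\pm$ and $\lambda_{t_k}^\pm$ through \eqref{eq:Q_closed_form_generic}. I would address this by reading $h$ as also carrying these $\mathcal{F}_{t_k}$-measurable order-flow inputs as parameters, or, following \cite[\S 10.2.3]{cartea2015book}, by treating them as given inputs at each step. Either way, $h$ does not depend on $W$ or $S$, and the cancellation shown above is the substantive content of the proof.

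A secondary point is well-posedness of the supremum and of the conditional expectations. Since $\mathcal{A}$ is finite, the supremum is a maximum over finitely many $\mathcal{F}_{t_k}$-measurable random variables, so a measurable maximiser exists and no measurable-selection argument is needed. Integrability of $S_{t_{k+1}}I'$ follows from (i), which includes integrability of $S$, together with the boundedness of $Q^\pm$ on the compact domain $\Omega$ for a fixed Prony parameter. This keeps $I'$ bounded, so the conditional expectations are well defined.
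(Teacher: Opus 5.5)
Your proposal is correct and follows essentially the paper's route: substitute the cash and inventory dynamics into the ansatz at $t_{k+1}$, use the $\mathcal{F}_{t_k}$-measurability of $Q^\pm_{t_{k+1}}$ and the martingale property to remove the price increment, and cancel $W_{t_k}+S_{t_k}I_{t_k}$. You make the backward induction explicit, and your collection of the $S_{t_k}$-terms is correct, whereas the paper's display has a harmless sign typo on $(S_{t_{k+1}}-S_{t_k})$.

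One caution on your aside: only your second reading keeps $h(t_{k+1},\cdot)$ deterministic, as $\mathcal{H}_{k+1}$ requires. That reading treats the order-flow inputs as given, which matches the paper's path-conditional evaluation. If instead $h$ carries $\mathcal{F}_{t_{k+1}}$-measurable inputs as random parameters, the recursion would contain $\mathbb{E}[h(t_{k+1},I')\mid\mathcal{F}_{t_k}]$ rather than $h(t_{k+1},I')$.
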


\begin{proof}
Substituting the cash and inventory updates~\eqref{eq:cash_dynamics}--\eqref{eq:inventory_dynamics} into the ansatz~\eqref{eq:cartea_ansatz} evaluated at time $t_{k+1}$ and collecting terms,
\[
V_{t_{k+1}} = W_{t_k} + S_{t_k} I_{t_k} + L_{t_k}^+ Q_{t_{k+1}}^+ + L_{t_k}^- Q_{t_{k+1}}^- - (S_{t_{k+1}} - S_{t_k})(I_{t_k} - Q_{t_{k+1}}^+ + Q_{t_{k+1}}^-) + h(t_{k+1}, I_{t_{k+1}}).
\]
Taking conditional expectation given $\mathcal{F}_{t_k}$, the martingale property~(i) annihilates the $(S_{t_{k+1}} - S_{t_k})$ term, while $\mathcal{F}_{t_k}$-measurability~(ii) makes $Q_{t_{k+1}}^\pm$ behave deterministically inside the expectation. Applying the dynamic programming principle~\eqref{eq:bellman_principle} and identifying with the ansatz~\eqref{eq:cartea_ansatz} at time $t_k$, the $W_{t_k} + S_{t_k} I_{t_k}$ terms cancel from both sides, leaving~\eqref{eq:h_recursion}. The terminal condition follows from the terminal utility in~\eqref{eq:value_objective} and the decomposition~\eqref{eq:cartea_ansatz}. For the continuous-time analogue, see \cite[\S 10.2.3]{cartea2015book}.
\end{proof}

In the empirical calculations, the backward recursion is evaluated separately for each trading day using the full observed sequence
of market-order counts and the reconstructed intensity path. For fixed Prony coefficients, the modelled executed volumes are treated as deterministic inputs for each candidate placement.
Continuation values at inventories between grid points are obtained by linear interpolation, and the initial daily values
provide the objective for parameter calibration. On the evaluation
day, the calibrated Prony coefficients remain fixed, while the
backward recursion is solved again using that day's order-flow
path to obtain the placement policy. Since the continuation values
depend on the entire realised sequence, this policy incorporates
order-flow observations subsequent to each decision time. The
reported experiment therefore evaluates the calibrated size
parameters retrospectively, conditional on the observed daily
path.



\subsection{Calibration of the Prony coefficients by Covariance-Matrix Adaptation Evolution Strategy (CMA-ES)}

\paragraph{Calibration problem.}
For a given Prony parameter vector
$\theta=\{w_i^{\pm},\beta_{i1}^{\pm},\beta_{i2}^{\pm},\beta_{i3}^{\pm}\}_{i=1}^{M^{\pm}}\in\R^{16}$,
the backward recursion~\eqref{eq:h_recursion} evaluates the
initial value at zero inventory,
\begin{equation}\label{eq:Jdef}
V(\theta)\;:=\;h(0,0;\theta).
\end{equation}
The outer calibration problem is
\begin{equation}\label{eq:outer_problem}
\theta^{\star}\;\in\;\arg\max_{\theta\in\Theta}\,V(\theta),
\qquad \Theta\subset\R^{16}.
\end{equation}
Here $\Theta$ is the compact parameter box specified
in~\eqref{eq:box}. The upper bound on the depth coefficients
$\beta_{i1}^{\pm}$ limits amplification at deep quote levels,
while non-negative weights and count and intensity coefficients
ensure a non-negative multiplier that is non-decreasing in
$N$ and $\lambda$. Executed volume is additionally capped by
$Q_{t_{k+1}}^{\pm}\le q_{\mathrm{avg}}^{\pm}\,N_{t_{k}}^{\pm}\,
P_{N}^{\pm}(L^{\pm})\,\kappa$, with $\kappa=3$.

\paragraph{CMA-ES search.}
The discrete maximization over $\mathcal{A}$ in
recursion~\eqref{eq:h_recursion} makes $V(\theta)$ non-smooth
where the optimal placement changes. We therefore use the
covariance-matrix adaptation evolution strategy
(CMA-ES)~\cite{hansen2001completely,hansen2016}, which ranks
candidate parameters by their objective values without requiring
derivatives.

\begin{algorithm}[H]
\caption{CMA-ES calibration with backward dynamic programming}
\label{alg:cmaes_dp}
\begingroup
\renewcommand{\baselinestretch}{1.0}\small
\begin{algorithmic}[1]

\Require Training-day input paths $\mathcal{D}_{\mathrm{train}}$
         and fixed upstream parameters.
\Require Parameter box $\Theta$, action set $\mathcal{A}$,
         inventory grid $\mathcal{I}$ containing zero,
         penalty $\lambda_{\mathrm{pen}}>0$, and inventory bound
         $I_{\max}\in(0,\infty]$.
\Require Initial mean $m^{(0)}$, step size $\sigma^{(0)}$,
         population size $\lambda_{\mathrm{cma}}$,
         and stopping criteria.
\Ensure Best evaluated parameter vector $\widehat{\theta}$
        and its objective value $V_{\mathrm{best}}$.

\State Initialize CMA-ES with
       $(m^{(0)},\sigma^{(0)},\mathbf{C}^{(0)}=I_{16})$
       and parameter bounds $\Theta$.
\State $V_{\mathrm{best}}\gets-\infty$;
       $\mathfrak{B}\gets-10^{15}$.

\While{the stopping criteria are not met}

    \State Generate candidates
           $\theta_1,\ldots,\theta_{\lambda_{\mathrm{cma}}}$
           using CMA-ES with its bound-handling rule.

    \For{$j=1,\ldots,\lambda_{\mathrm{cma}}$}

        \For{each day $d\in\mathcal{D}_{\mathrm{train}}$}

            \State Compute
                   $Q_{d,k+1}^{\pm}(L;\theta_j)$
                   for every interval and candidate depth,
                   applying the prescribed volume cap.
            \Statex \hspace{\algorithmicindent}
                    Set $Q_{d,k+1}^{\pm}(0;\theta_j)=0$
                    if the no-post action is included.

            \State For each $a=(L^+,L^-)\in\mathcal{A}$,
                   form the reward and inventory increment:
            \Statex \hspace{\algorithmicindent}
                    $\displaystyle
                    r_{d,k}(a)=L^+Q_{d,k+1}^{+}(L^+;\theta_j)
                              +L^-Q_{d,k+1}^{-}(L^-;\theta_j),$
            \Statex \hspace{\algorithmicindent}
                    $\displaystyle
                    \Delta I_{d,k}(a)
                    =-Q_{d,k+1}^{+}(L^+;\theta_j)
                     +Q_{d,k+1}^{-}(L^-;\theta_j).$

            \State Initialize
                   $h_{d,K_d}(I)\gets-\lambda_{\mathrm{pen}}I^2$
                   for all $I\in\mathcal{I}$.

            \For{$k=K_d-1,\ldots,0$}

                \State Construct the linear interpolant
                       $\widetilde h_{d,k+1}$ of $h_{d,k+1}$,
                       with boundary clamping outside the grid.
                \State Set
                       $\widetilde h_{d,k+1}(I')=\mathfrak{B}$
                       whenever $|I'|>I_{\max}$.

                \For{each $I\in\mathcal{I}$}
                    \State $\displaystyle
                    h_{d,k}(I)\gets
                    \max_{a\in\mathcal{A}}
                    \left\{
                    r_{d,k}(a)
                    +\widetilde h_{d,k+1}
                       \bigl(I+\Delta I_{d,k}(a)\bigr)
                    \right\}.$
                \EndFor

            \EndFor

        \EndFor

        \State $\displaystyle
               V_j\gets
               \frac{1}{|\mathcal{D}_{\mathrm{train}}|}
               \sum_{d\in\mathcal{D}_{\mathrm{train}}}h_{d,0}(0).$

        \If{$V_j>V_{\mathrm{best}}$}
            \State $\widehat{\theta}\gets\theta_j$;
                   $V_{\mathrm{best}}\gets V_j$.
        \EndIf

    \EndFor

    \State Update the CMA-ES search distribution using
           candidate objective values $-V_j$,
           following the standard update rules~\cite{hansen2016}.

\EndWhile

\State \Return $(\widehat{\theta},V_{\mathrm{best}})$.

\end{algorithmic}
\endgroup
\end{algorithm}

At generation $g$, the search distribution has mean
$m^{(g)}\in\R^d$, step size $\sigma^{(g)}>0$, and positive-definite
covariance $\mathbf{C}^{(g)}$, with $d=16$. A population of
$\lambda_{\mathrm{cma}}$ candidates is sampled as

\begin{equation}\label{eq:cma_sample}
\theta_{k}^{(g+1)}=m^{(g)}+\sigma^{(g)}\,\mathbf{z}_{k},
\qquad
\mathbf{z}_{k}\sim\mathcal{N}\!\big(\mathbf{0},\mathbf{C}^{(g)}\big),
\qquad k=1,\ldots,\lambda_{\mathrm{cma}},
\end{equation}

Each candidate is evaluated by the backward recursion of
Proposition~\ref{prop:cartea}. Ranking candidates by decreasing
$V$, the best $\mu<\lambda_{\mathrm{cma}}$ are recombined to give

\begin{equation}\label{eq:cma_mean}
m^{(g+1)}=\sum_{i=1}^{\mu}w_{i}\,\theta_{i:\lambda}^{(g+1)},
\qquad
w_{i}\propto\ln(\mu+\tfrac12)-\ln i,\quad \sum_{i=1}^{\mu}w_{i}=1,
\end{equation}

where $\theta_{i:\lambda}^{(g+1)}$ is the $i$-th ranked
candidate and $w_i$ denotes a recombination weight, distinct
from the Prony amplitudes $w_i^{\pm}$. The cumulative evolution
paths $p_{\sigma}$ and $p_c$ determine the step-size and covariance
updates:

\begin{equation}\label{eq:cma_adapt}
\sigma^{(g+1)}=\sigma^{(g)}\exp\!\Big(\tfrac{c_{\sigma}}{d_{\sigma}}
\big(\tfrac{\|p_{\sigma}^{(g+1)}\|}{\E\|\mathcal{N}(\mathbf{0},I_{d})\|}-1\big)\Big),
\qquad
\mathbf{C}^{(g+1)}=(1-c_{1}-c_{\mu})\mathbf{C}^{(g)}
+c_{1}\,p_{c}^{(g+1)}\big(p_{c}^{(g+1)}\big)^{\top}
+c_{\mu}\sum_{i=1}^{\mu}w_{i}\,\mathbf{y}_{i}\mathbf{y}_{i}^{\top},
\end{equation}

Here $\mathbf{y}_{i}=(\theta_{i:\lambda}^{(g+1)}-m^{(g)})/\sigma^{(g)}$;
the evolution paths and learning rates
$c_{\sigma},d_{\sigma},c_c,c_1,c_{\mu}$ follow the standard
specification of~\cite{hansen2016}.

\paragraph{Numerical specification.}
We initialize $\mathbf{C}^{(0)}=I_{16}$ and use $\mu=6$ parents.
Section~\ref{sec:cmaes_calibration} specifies the initial parameter
vector, parameter bounds, population size, step size, search budget,
and stopping tolerance, together with the hard inventory constraint
used in that calibration. It also reports the calibrated coefficients
and their sensitivity to the soft inventory penalty.


\section{Parameters}

We calibrate the intensity parameters that determine the order-flow
inputs to the executed-volume specification~\eqref{eq:Q_form}.

\subsection{Calibration data and parameterization}

\label{sec:intensity_calibration_data}

Calibration uses the one-second event representation in
Section~\ref{sec:sdhawkes}. Let $\iota$ map a market-order count
to its index in the six bins $\mathcal{B}$, numbered from zero.
Each interval $[t_k,t_{k+1})$ produces a buy-side and a sell-side
record, both assigned the timestamp $t_k$, with types
\[
e_k^+ = \iota\!\left(N_{t_k}^+\right),
\qquad
e_k^- = \iota\!\left(N_{t_k}^-\right)+6.
\]
A record is retained when the corresponding count is zero.
The intensity model is therefore fitted to the count-bin event
representation. Each record contributes a unit jump, with the
order count encoded in its type.

Both records receive the same state label $x_k\in\{0,\ldots,4\}$.
The queue imbalance in~\eqref{eq:qi_def} is computed from the
best bid and ask depths associated with the most recent limit-order
observation strictly before $t_k$. Thresholds of $-0.6$, $-0.2$,
$0.2$, and $0.6$ assign it to the five states in increasing order
of imbalance. A neutral label is used when no valid prior
observation is available.

For each trading day, write
$\{(T_n,E_n,x_n)\}_{n=1}^{N_T}$ for the resulting sequence,
where $T_1\le\cdots\le T_{N_T}$ and $x_n$ is the state label
attached to record $n$. Records with equal timestamps are
processed sequentially in the recursive evaluation described in
Section~4.2. The state labels are observed inputs throughout
calibration; the transition kernel $\boldsymbol{\phi}$ defined
in~\eqref{eq:transition_kernel} is not estimated.

The parameters to be fitted are the baseline intensities,
excitation amplitudes, and decay rates,
$\boldsymbol{\psi}=(\boldsymbol{\nu},\boldsymbol{\alpha},
\boldsymbol{\beta})$, with
\[
\boldsymbol{\nu}\in(0,\infty)^{d_e},
\qquad
\boldsymbol{\alpha},\boldsymbol{\beta}
\in(0,\infty)^{d_e\times d_x\times d_e},
\qquad d_e=12,\quad d_x=5.
\]
The indices of $\alpha_{e'xe}$ and $\beta_{e'xe}$ identify the
source type, its observed state, and the target type, respectively.
There are $d_e+2d_e^2d_x=1452$ parameters, shared across the calibration days and fitted jointly by the least-squares procedure.


\subsection{Least-squares criterion and recursive evaluation}

We use a least-squares criterion for each daily record sequence.
To account for tied timestamps, the excitation state is indexed
by record order. Combining the source type and its state gives
\begin{equation}\label{eq:flat_src}
S_m=E_m d_x+x_m,
\qquad d_{\mathrm{src}}=d_e d_x=60.
\end{equation}
With the same indexing for the kernel parameters, define
\begin{equation}\label{eq:R_def}
R_{m,s,k}(\boldsymbol{\beta})
=\sum_{n<m}\mathbf{1}_{\{S_n=s\}}
  e^{-\beta_{s,k}(T_m-T_n)},
\qquad 0\le s<d_{\mathrm{src}},\quad 0\le k<d_e.
\end{equation}
For $\Delta_m=T_{m+1}-T_m\ge0$, these states satisfy
\begin{equation}\label{eq:R_recursion}
R_{m+1,s,k}
=\bigl(R_{m,s,k}+\mathbf{1}_{\{S_m=s\}}\bigr)
 e^{-\beta_{s,k}\Delta_m},
\qquad R_{1,s,k}=0.
\end{equation}
The history is initialized separately for each trading day.
After the jump associated with record $m$, write
\begin{equation}\label{eq:B_def}
B_{s,k}^{(m)}
=\alpha_{s,k}\bigl(R_{m,s,k}+\mathbf{1}_{\{S_m=s\}}\bigr).
\end{equation}
The intensity used on the following interval is
\begin{equation}\label{eq:intensity_specialized}
\lambda_k^{(m)}(u;\boldsymbol{\psi})
=\nu_k+\sum_{s=0}^{d_{\mathrm{src}}-1}
 B_{s,k}^{(m)}e^{-\beta_{s,k}u},
\qquad 0\le u\le\Delta_m,
\end{equation}
where $u$ measures elapsed time from $T_m$.

Let $T_{\mathrm{obs}}=T_{N_T}-T_1>0$ be the observed record span.
The daily calibration criterion is
\begin{equation}\label{eq:lse}
\mathcal{R}_T(\boldsymbol{\psi})
=\frac{1}{T_{\mathrm{obs}}}\sum_{m=1}^{N_T}f_m(\boldsymbol{\psi}),
\qquad f_{N_T}(\boldsymbol{\psi})=0,
\end{equation}
with, for $m=1,\ldots,N_T-1$,
\begin{equation}\label{eq:fm_specialized}
f_m(\boldsymbol{\psi})
=\int_0^{\Delta_m}\sum_{k=0}^{d_e-1}
  \bigl[\lambda_k^{(m)}(u;\boldsymbol{\psi})\bigr]^2\,du
 -2\lambda_{E_{m+1}}^{(m)}(\Delta_m;\boldsymbol{\psi}).
\end{equation}
The event terms run from the second to the last record, and
integration stops at $T_{N_T}$. The first record enters through
the history update.

Expanding the square and integrating yields
\begin{align}
f_m(\boldsymbol{\psi})
={}&\Delta_m\sum_k\nu_k^2
 +2\sum_{k,s}\nu_k B_{s,k}^{(m)}
   \frac{1-e^{-\beta_{s,k}\Delta_m}}{\beta_{s,k}}
   \notag\\
&+\sum_{k,s,s'}B_{s,k}^{(m)}B_{s',k}^{(m)}
  \frac{1-e^{-(\beta_{s,k}+\beta_{s',k})\Delta_m}}
       {\beta_{s,k}+\beta_{s',k}}
  \notag\\
&-2\nu_{E_{m+1}}
 -2\sum_s B_{s,E_{m+1}}^{(m)}
   e^{-\beta_{s,E_{m+1}}\Delta_m}.
\label{eq:fm_explicit}
\end{align}
At tied timestamps, $\Delta_m=0$: the integral vanishes, while
the event term includes the jump of the preceding record.
This is the sequential convention used for the paired records.
Once the history state is available, evaluating a contribution
requires $\mathcal{O}(d_{\mathrm{src}}^2d_e)$ operations, owing
to the double sum over source indices.


\subsection{Stochastic least-squares calibration}

We fit the shared intensity parameters by weighted stochastic
updates. The optimization variables are latent log parameters
$\boldsymbol{z}$, with
$\boldsymbol{\psi}=\mathcal{C}(\exp\boldsymbol{z})$, where
exponentiation is componentwise and $\mathcal{C}$ clips each
parameter to the ranges
\begin{equation}\label{eq:intensity_parameter_bounds}
10^{-6}\le\nu_k\le50,
\qquad
10^{-8}\le\alpha_{s,k}\le10,
\qquad
10^{-3}\le\beta_{s,k}\le200.
\end{equation}

At iteration $r$, one calibration day is selected uniformly.
Its history states $R_m^{(r)}=(R_{m,s,k}(\boldsymbol{\beta}^{(r)}))_{s,k}$
are recomputed using the current parameters. For $m<N_T$, the
sampling weights are
\begin{equation}\label{eq:proxy_weights}
w_m^{(r)}
=\left|\lambda_{E_{m+1}}^{(m)}
  (\Delta_m;\boldsymbol{\psi}^{(r)})\right|
 +\varepsilon\sum_{s,k}
  \left|B_{s,k}^{(m)}(\boldsymbol{\psi}^{(r)})\right|,
\qquad \varepsilon=10^{-3}.
\end{equation}
The terminal index, whose loss contribution is zero, receives the
mean of the other weights. Normalization gives
\begin{equation}\label{eq:record_sampling_probabilities}
w_{N_T}^{(r)}=\frac{1}{N_T-1}\sum_{m=1}^{N_T-1}w_m^{(r)},
\qquad
p_m^{(r)}=\frac{w_m^{(r)}}{\sum_{\ell=1}^{N_T}w_\ell^{(r)}}.
\end{equation}
These weights emphasize records with high fitted intensity and
large accumulated excitation.

A set $\mathcal{I}_r$ of $S$ indices is drawn without replacement
using the relative weights $p_m^{(r)}$. Let
$F_m(\boldsymbol{\psi};R_m^{(r)})$ denote the closed-form
contribution $f_m$ evaluated with the history state held fixed.
The sampled objective is
\begin{equation}\label{eq:sampled_lse_objective}
\widetilde{\mathcal{R}}_r(\boldsymbol{z})
=\frac{1}{S T_{\mathrm{obs}}}
 \sum_{m\in\mathcal{I}_r}
 \frac{F_m\!\left(\mathcal{C}(\exp\boldsymbol{z});R_m^{(r)}\right)}
      {p_m^{(r)}}.
\end{equation}
The update uses
$g^{(r)}=\nabla_{\boldsymbol{z}}
\widetilde{\mathcal{R}}_r(\boldsymbol{z}^{(r)})$, with both
the states and sampling weights fixed during differentiation.
In particular, the decay-rate derivatives omit the dependence
of $R_m$ on $\boldsymbol{\beta}$. Also, $p_m^{(r)}$ is a
normalized sampling weight, not the probability that index $m$
is included in the set. The inverse weighting is therefore not
generally unbiased under this sampling design. Together, these
choices define an approximate update for the least-squares
calibration.

The vector $g^{(r)}$ is clipped to a Euclidean norm of at most one
before an Adam update of $\boldsymbol{z}$~\cite{kingma2014adam}.
We use moment coefficients $(0.9,0.999)$ and a denominator
regularizer of $10^{-8}$. All components are initialized at
$\nu_k=0.05$, $\alpha_{s,k}=0.01$, and $\beta_{s,k}=1$.
The reported calibration uses $S=64$ and $300$ updates; its
learning-rate parameter remains $5\times10^{-3}$ throughout.
The final parameters
$\widehat{\boldsymbol{\psi}}
=\mathcal{C}(\exp\boldsymbol{z}^{(300)})$
are retained for intensity reconstruction. The recorded training
loss is the weighted sampled objective evaluated before each
update.


\subsection{Construction of the order-flow inputs}

The calibrated vector $\widehat{\boldsymbol{\psi}}$ is used to
reconstruct the order-flow inputs on the one-second trading grid.
For each trading day, define
\begin{align}
\widehat{R}_{s,e}(t_k^-)
&=\sum_{\substack{n:\,T_n<t_k\\ S_n=s}}
  \exp\!\bigl[-\widehat{\beta}_{s,e}(t_k-T_n)\bigr],
  \label{eq:grid_history_reconstruction}\\
\widehat{\lambda}_e(t_k^-)
&=\widehat{\nu}_e
  +\sum_{s=0}^{d_{\mathrm{src}}-1}
   \widehat{\alpha}_{s,e}\widehat{R}_{s,e}(t_k^-),
  \qquad e=0,\ldots,11.
  \label{eq:grid_intensity_reconstruction}
\end{align}
The history is initialized at zero at the start of each day.
Only records with timestamps strictly earlier than $t_k$
contribute, so both records at $t_k$ are excluded.

The inputs associated with buy and sell order flow are
\begin{equation}\label{eq:side_intensity_inputs}
\lambda_{t_k}^{+}
=\sum_{e=0}^{5}\widehat{\lambda}_e(t_k^-),
\qquad
\lambda_{t_k}^{-}
=\sum_{e=6}^{11}\widehat{\lambda}_e(t_k^-).
\end{equation}
These quantities aggregate the fitted intensities of the
count-bin records, including the zero-count types $e=0$ and
$e=6$. They enter the executed-volume model as order-flow
covariates. The reconstructed sequences are held fixed during
calibration of the Prony coefficients and subsequent strategy
evaluation.

\section{Numerical Validation} 


\subsection{Data Profile}

\paragraph{Asset and sampling window.}
We use LOBSTER-format, millisecond-stamped MSFT event messages
from November~3--7, 2014, together with the first ten levels of the
bid and ask books. The records contain timestamps, message types,
order sizes, prices, and side indicators. We restrict the sample
to the continuous-auction window 09:30--16:00, excluding auctions,
pre- and post-market trading, and overnight periods. The sample
contains approximately $5.4\times10^6$ messages over five sessions
of $T_{\text{day}}=23{,}400$ seconds each. Order modifications,
cancellations, and trading-halt indicators are not used.

\paragraph{Order classification.}
Using the message types, we identify market orders $\mathcal{M}$
as visible and hidden executions against standing limit orders,
and limit orders $\mathcal{L}$ as new submissions. Market-order
direction follows the side consumed: $B_n=0$ denotes a buy order
consuming ask liquidity, and $B_n=1$ a sell order consuming bid
liquidity. The five-day sample contains
$|\mathcal{M}|\approx1.30\times10^5$ market orders and
$|\mathcal{L}|\approx2.65\times10^6$ limit orders, a ratio of
approximately $4.9\%$.

\paragraph{Average order size.}
Per-event sizes $V_n$ are right-skewed and concentrated at multiples
of $100$ shares. We estimate $q_{\text{avg}}^\pm$ by the sample mean,

\begin{equation}\label{eq:qavg_def}
\widehat{q}_{\text{avg}}^\pm
\;:=\;
\frac{1}{|\mathcal{M}^\pm|}\sum_{n\in\mathcal{M}^\pm}V_n,
\end{equation}

where $\mathcal{M}^\pm$ denotes the buy/sell sub-stream.
The five-day averages are $\widehat{q}_{\text{avg}}^+\approx203.6$
and $\widehat{q}_{\text{avg}}^-\approx206.9$ shares, differing by
approximately $1.6\%$.

\paragraph{Market-order arrival profile.}
Table~\ref{tab:mo_profile_summary} summarizes the per-second counts
over $K_{\text{total}}=117{,}000$ intervals, and
Figure~\ref{fig:mo_distribution} displays their distributions.
Most intervals contain no market orders, while active intervals
exhibit a pronounced right tail.

\begin{table}[t]
\centering
\begin{tabular}{lrr}
\toprule
& Buy ($N_{t_k}^+$) & Sell ($N_{t_k}^-$) \\
\midrule
Intervals with $N=0$ &
$105{,}795$ ($90.42\%$) & $106{,}245$ ($90.81\%$) \\
Intervals with $N\ge 1$ &
$\phantom{0}11{,}205$ ($\phantom{0}9.58\%$) & $\phantom{0}10{,}755$ ($\phantom{0}9.19\%$) \\
Mean (all intervals) &
$0.634$ & $0.480$ \\
Mean (active intervals, $N\ge 1$) &
$6.62$ & $5.22$ \\
Maximum observed per-interval count &
$337$ & $91$ \\
\bottomrule
\end{tabular}
\caption{Empirical profile of the per-second market-order counts
$N_{t_k}^\pm$ over the five-day calibration sample
($K_{\text{total}}=117{,}000$ intervals, $\Delta t=1\,\mathrm{s}$).
The active-interval mean is approximately ten times the
all-interval mean on both sides, and the maximum per-interval count
on the buy side exceeds the active-interval mean by a factor
exceeding fifty.}
\label{tab:mo_profile_summary}
\end{table}

\begin{figure}[t]
\centering
\includegraphics[width=0.85\linewidth]{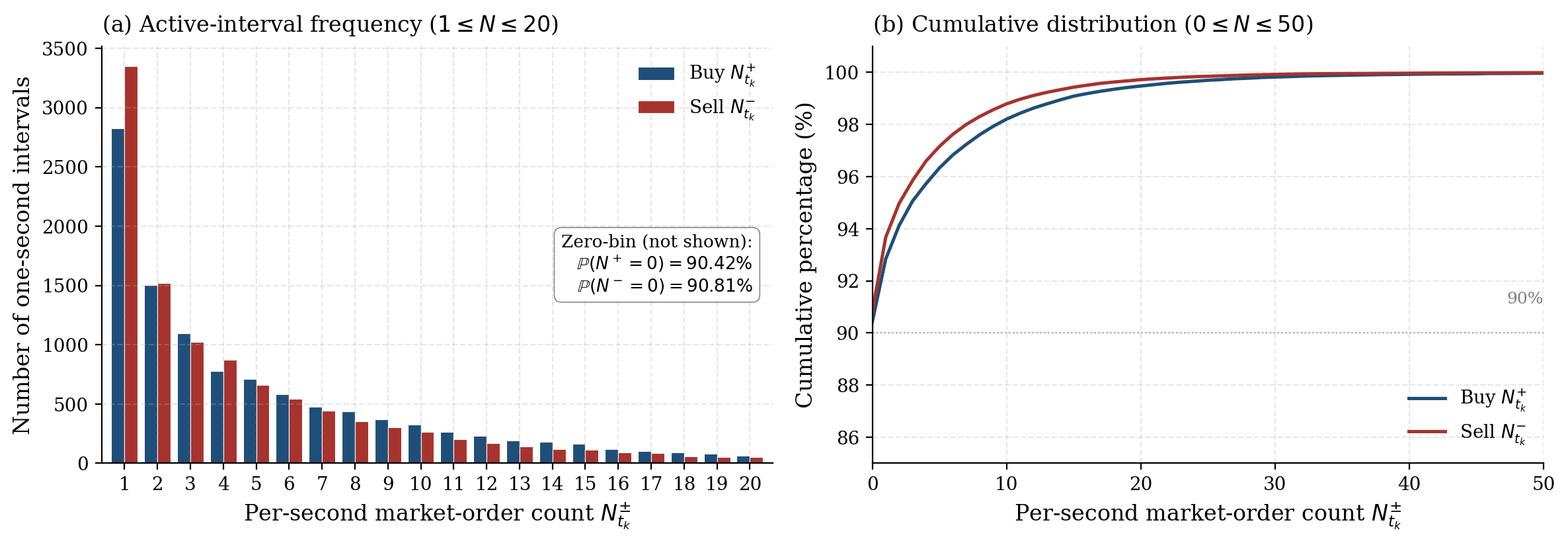}
\caption{Empirical distribution of the per-second market-order counts
$N_{t_k}^\pm$ over the five-day calibration sample. \emph{Left:}
frequency histogram of $N_{t_k}^\pm$ on $\Delta t=1\,\mathrm{s}$,
truncated at $N\le 20$ for visibility; the zero bin accounts for
$90.42\%$/$90.81\%$ of intervals on the buy/sell side. \emph{Right:}
cumulative distribution function, capped at $N\le 50$.}
\label{fig:mo_distribution}
\end{figure}

\paragraph{In-sample/out-of-sample split.}
We use November~3--6 for training
($\mathcal{D}_{\text{train}}$, $K_{\text{train}}=93{,}600$ intervals)
and November~7 for out-of-sample evaluation
($\mathcal{D}_{\text{OOS}}$, $K_{\text{OOS}}=23{,}400$ intervals).
The demand coefficients $(A_N^\pm,B_N^\pm,C_N^\pm)$, sdHawkes
parameters $(\widehat{\boldsymbol{\phi}},\widehat{\boldsymbol{\psi}})$,
and Prony coefficients
$\theta^\star=\arg\max_\theta J(\theta;\mathcal{D}_{\text{train}})$
are estimated on $\mathcal{D}_{\text{train}}$ and held fixed during
out-of-sample evaluation.

\subsection{Exponential Demand Function Results}

We calibrate the exponential demand function:

\begin{equation}\label{eq:exp_demand}
P_N^\pm(L)\;=\;A_N^\pm\,e^{-B_N^\pm\,L}\,+\,C_N^\pm,
\qquad L\in\{1,2,3,4,5\},
\end{equation}

separately for each of the five nonzero count bins and each
direction.

\paragraph{Fit target and objective.}
Let $\mathcal{M}_i^\pm$ denote the market-order events of side
$\pm$ in training interval $i$ (Section~5.1). We retain intervals
with $|\mathcal{M}_i^\pm|\ge 1$ and complete book information
through level~$5$. For a representative order of size
$q_{\text{avg}}^\pm$ posted at the start of the interval, the
front-of-queue fill probability at level $L$ is

\begin{equation}\label{eq:emp_fill}
P_i^\pm(L)\;=\;
\min\!\left\{
\frac{\bigl(S_i^\pm-D_{i,L-1}^\mp\bigr)^{+}}{q_{\text{avg}}^\pm},
\;1
\right\},
\qquad L\in\{1,\ldots,5\},
\end{equation}

where $S_i^\pm:=\sum_{n\in\mathcal{M}_i^\pm}V_n$ is the
aggregate market-order size and $D_{i,L}^\mp$ is the cumulative
opposite-side depth through level $L$, measured at the first event
of $\mathcal{M}_i^\pm$. We set $D_{i,0}^\mp:=0$ and
$(x)^+:=\max(x,0)$. The empirical target is the bin average

\begin{equation}\label{eq:bin_avg}
\widehat{P}_{\mathcal{N}}^{\pm}(L)\;:=\;
\frac{1}{|\mathcal{I}_{\mathcal{N}}^\pm|}
\sum_{i\in\mathcal{I}_{\mathcal{N}}^\pm}P_i^\pm(L),
\qquad
\mathcal{I}_{\mathcal{N}}^\pm\;:=\;
\bigl\{i:|\mathcal{M}_i^\pm|\in\mathcal{N}\bigr\},
\end{equation}

The coefficients are estimated by constrained nonlinear least
squares over the five price levels:

\begin{equation}\label{eq:fit_obj}
(\widehat{A}_{\mathcal{N}}^\pm,\widehat{B}_{\mathcal{N}}^\pm,\widehat{C}_{\mathcal{N}}^\pm)\;=\;
\argmin_{(A,B,C)\in[0,1]\times[0,5]\times[0,1]}
\sum_{L=1}^{5}
\Bigl[\widehat{P}_{\mathcal{N}}^{\pm}(L)\;-\;\bigl(A\,e^{-BL}+C\bigr)\Bigr]^2.
\end{equation}

\paragraph{Calibrated coefficients and goodness of fit.}
Table~\ref{tab:exp_demand_coefs} reports the estimated coefficients
and sample sizes; Figure~\ref{fig:exp_demand} compares the fitted
curves with the empirical targets.

\begin{table}[t]
\centering
\renewcommand{\arraystretch}{1.05}
\begin{tabular}{lrr|rrr|rrr}
\toprule
& \multicolumn{2}{c|}{Sample sizes}
& \multicolumn{3}{c|}{Buy ($P_N^{+}$)}
& \multicolumn{3}{c}{Sell ($P_N^{-}$)} \\
$\mathcal{N}$
& $|\mathcal{I}_{\mathcal{N}}^{+}|$ & $|\mathcal{I}_{\mathcal{N}}^{-}|$
& $\widehat{A}_{\mathcal{N}}^{+}$ & $\widehat{B}_{\mathcal{N}}^{+}$ & $\widehat{C}_{\mathcal{N}}^{+}$
& $\widehat{A}_{\mathcal{N}}^{-}$ & $\widehat{B}_{\mathcal{N}}^{-}$ & $\widehat{C}_{\mathcal{N}}^{-}$ \\
\midrule
$\{1,2,3\}$
& $5{,}426$ & $5{,}895$
& $1.0000$ & $0.7234$ & $1.78\!\times\!10^{-16}$
& $1.0000$ & $0.7353$ & $7.79\!\times\!10^{-21}$ \\
$\{4,5,6\}$
& $2{,}069$ & $2{,}077$
& $1.0000$ & $0.4970$ & $7.07\!\times\!10^{-15}$
& $1.0000$ & $0.4856$ & $6.51\!\times\!10^{-12}$ \\
$\{7,8,9\}$
& $1{,}283$ & $1{,}102$
& $1.0000$ & $0.4596$ & $1.62\!\times\!10^{-12}$
& $1.0000$ & $0.4510$ & $8.77\!\times\!10^{-4}$ \\
$\{10,11,12\}$
& $\phantom{0}819$ & $\phantom{0}640$
& $1.0000$ & $0.4440$ & $6.16\!\times\!10^{-3}$
& $1.0000$ & $0.4441$ & $5.09\!\times\!10^{-3}$ \\
$\{13,14,\ldots\}$
& $1{,}608$ & $1{,}040$
& $1.0000$ & $0.4231$ & $3.27\!\times\!10^{-2}$
& $1.0000$ & $0.4155$ & $4.41\!\times\!10^{-2}$ \\
\bottomrule
\end{tabular}
\caption{Calibrated exponential demand-function coefficients
$(\widehat{A}_{\mathcal{N}}^\pm,\widehat{B}_{\mathcal{N}}^\pm,\widehat{C}_{\mathcal{N}}^\pm)$
of~\eqref{eq:exp_demand} on the four-day training sample
$\mathcal{D}_{\text{train}}$, per $\Delta t=1\,\mathrm{s}$
market-order count bin $\mathcal{N}\in\mathcal{B}\setminus\{0\}$ and
per direction $\pm\in\{+,-\}$. Sample sizes
$|\mathcal{I}_{\mathcal{N}}^{\pm}|$ are the numbers of training-set
intervals contributing to~\eqref{eq:bin_avg}. The $N=0$ bin is
omitted from the table: by construction
$P_{N=0}^\pm(L)\equiv 0$ for all $L$.}
\label{tab:exp_demand_coefs}
\end{table}

\begin{figure}[t]
\centering
\includegraphics[width=\linewidth]{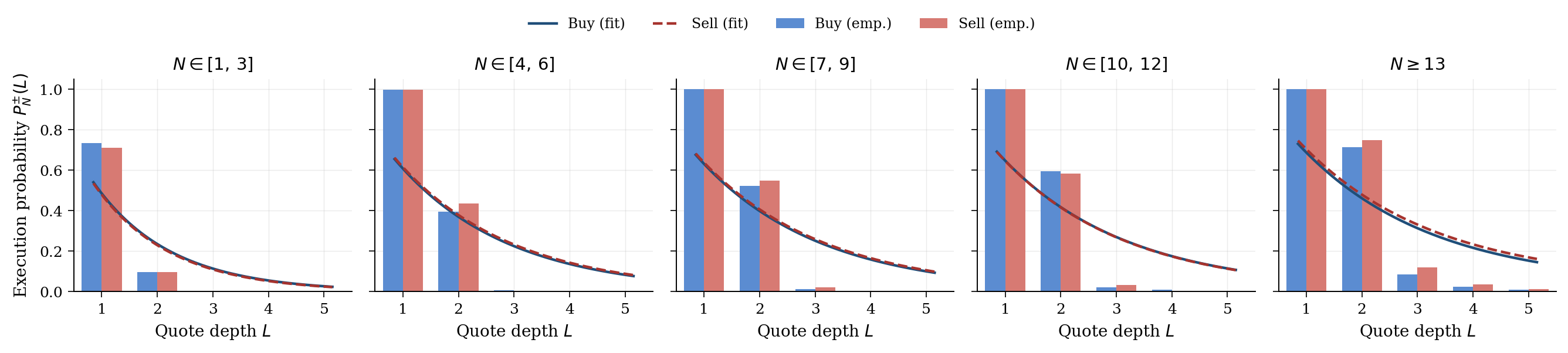}
\caption{Calibrated exponential demand function $P_N^\pm(L)$ by
market-order count bin $\mathcal{N}\in\mathcal{B}\setminus\{0\}$.
Each panel shows the empirical per-bin average
$\widehat{P}_{\mathcal{N}}^{\pm}(L)$ of~\eqref{eq:bin_avg} as bars
(blue: buy side, red: sell side) at $L\in\{1,2,3,4,5\}$, with the
corresponding three-parameter exponential fit~\eqref{eq:exp_demand}
overlaid as curves (solid blue: buy fit, dashed red: sell fit) on
the continuous $L$-axis.}
\label{fig:exp_demand}
\end{figure}

\paragraph{Discussion of the calibrated profile.}
The amplitude constraint is active in every bin and direction,
with $\widehat{A}_{\mathcal{N}}^\pm=1$. The decay rate
$\widehat{B}_{\mathcal{N}}^\pm$ decreases across successively
higher count bins, consistent with more intense market-order flow
consuming liquidity at deeper levels of the book.

The residual constant $\widehat{C}_{\mathcal{N}}^\pm$ is close to
zero in the lower count bins and increases in the highest bin,
reflecting a greater probability of execution at deeper levels.
The fitted profiles are similar across the buy and sell sides.

\subsection{State-Dependent Hawkes Parameters Results}

We estimate the sdHawkes parameters
$\boldsymbol{\psi}=(\boldsymbol{\mu},\boldsymbol{\alpha},\boldsymbol{\beta})$
on $\mathcal{D}_{\text{train}}$ using ASLSD.

\paragraph{Event encoding and state discretization.}
Using the count bins of Section~2.2, each one-second interval
generates two synchronous events at $t_k$, with types
$e_k^+=\iota(N_{t_k}^+)\in\{0,\ldots,5\}$ and
$e_k^-=\iota(N_{t_k}^-)+6\in\{6,\ldots,11\}$, where $\iota$
maps counts to bins. The queue imbalance~\eqref{eq:qi_def} is
discretized at thresholds $-0.6,-0.2,0.2,0.6$ into the five
states, indexed by $x\in\{0,\ldots,4\}$ in
increasing order. Both events inherit $X(t_k^-)$, giving
$d_e=12$ event types and $d_x=5$ states.

\paragraph{Convergence diagnostic.}
The rolling-mean LSE decreases from $-0.23$ to $-1.35$ over
$300$ iterations at a constant learning rate
(Figure~\ref{fig:sdh_convergence}).

\begin{figure}[t]
\centering
\includegraphics[width=0.98\linewidth]{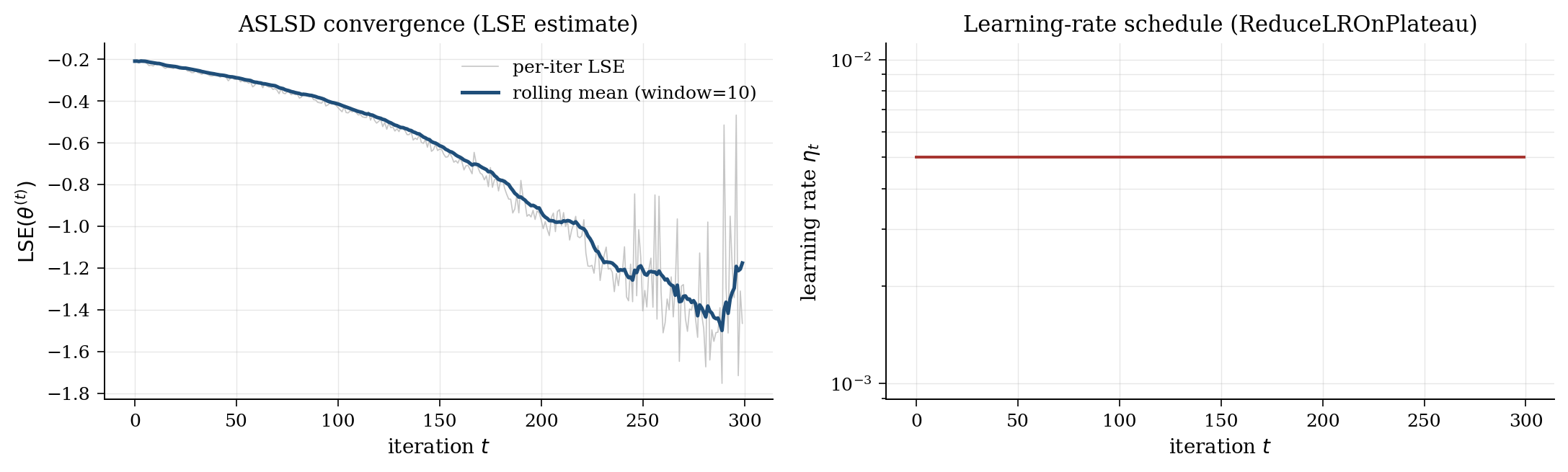}
\caption{ASLSD training trajectory on the four-day training
sample $\mathcal{D}_{\text{train}}$. \emph{Left:} per-iteration
LSE estimate $\widetilde{\mathcal{L}}(\boldsymbol{\psi}^{(t)})$
(grey) and rolling mean with window length~$10$ (dark blue).
\emph{Right:} learning-rate schedule (constant; the plateau
scheduler did not trigger).}
\label{fig:sdh_convergence}
\end{figure}

\paragraph{Baseline intensities.}
The $N=0$ baselines dominate on both sides
(Figure~\ref{fig:sdh_baselines}), consistent with the prevalence
of intervals without market orders. Active-bin
baselines are approximately constant across bins and similar
across directions, leaving their differences in temporal dynamics
to the excitation kernels.

\begin{figure}[t]
\centering
\includegraphics[width=0.98\linewidth]{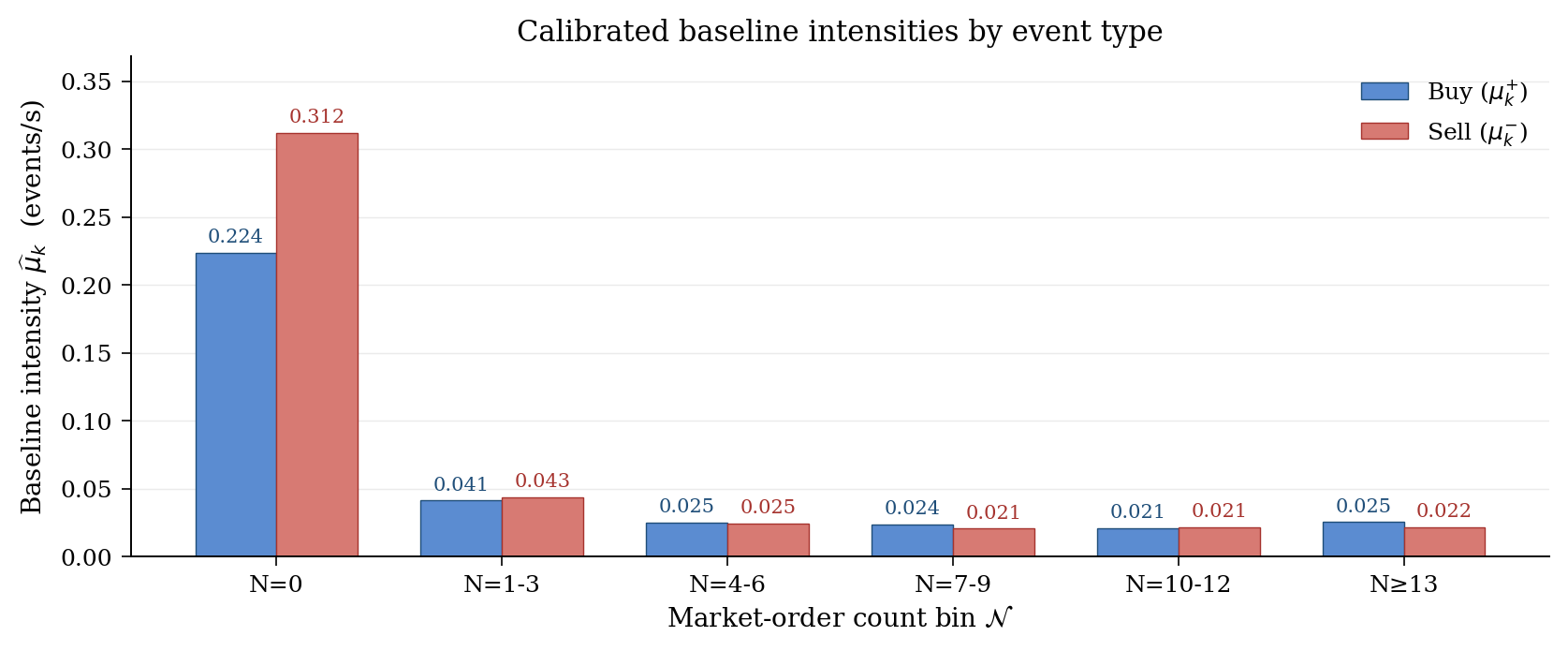}
\caption{Calibrated baseline intensities $\widehat{\mu}_k$, by
market-order count bin $\mathcal{N}\in\mathcal{B}$ and direction.
Buy and sell baselines within each $N\ge 1$ bin agree to within
$3\times 10^{-3}$ events/s; the $N=0$ baselines exceed the
active-bin baselines by roughly one order of magnitude.}
\label{fig:sdh_baselines}
\end{figure}

\paragraph{Excitation magnitude and decay structure.}
Figures~\ref{fig:sdh_alpha} and~\ref{fig:sdh_beta} show the
state averages
$\bar{\alpha}_{j,k}:=d_x^{-1}\sum_x\widehat{\alpha}_{(j,x),k}$
and
$\bar{\beta}_{j,k}:=d_x^{-1}\sum_x\widehat{\beta}_{(j,x),k}$.
Excitation magnitudes are largest for the $N=0$ source and target
types, $j,k\in\{0,6\}$.

Decay rates range from $0.15$ to $2.62\,\mathrm{s}^{-1}$.
Interactions between $N=0$ types have memory horizons of
approximately $5$--$6\,\mathrm{s}$, consistent with persistent
clusters of silent intervals. Cross-side interactions involving
high-count targets decay more rapidly, with rates near
$2\,\mathrm{s}^{-1}$ and sub-second memory horizons.

\begin{figure}[t]
\centering
\includegraphics[width=0.78\linewidth]{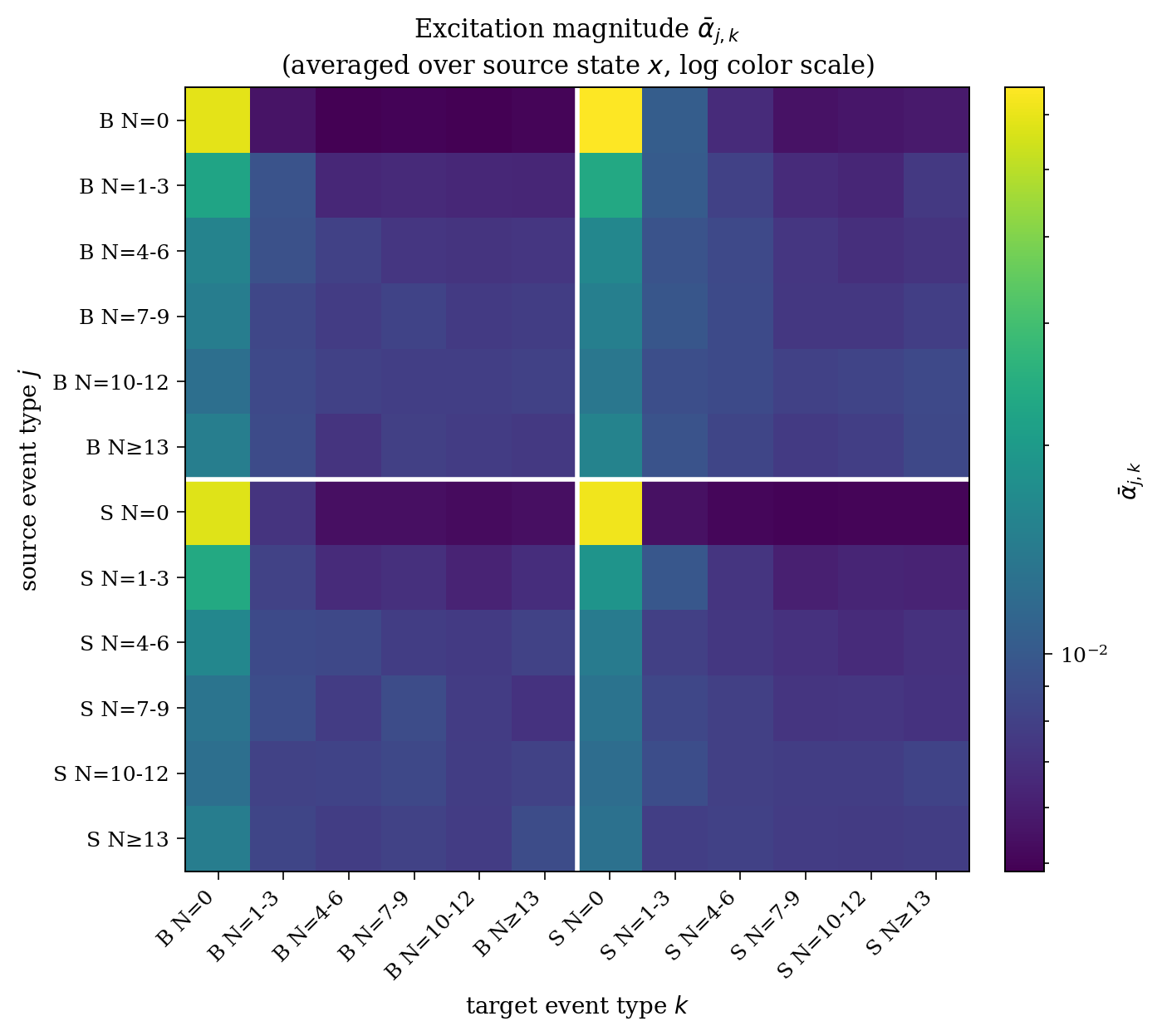}
\caption{Calibrated excitation magnitude $\bar{\alpha}_{j,k}$
of the intensity model of Section~2.2, averaged over source state
$x$. Rows index source event types $j$; columns index target event
types $k$; the white separators delimit the Buy
($k\in\{0,\ldots,5\}$) and Sell ($k\in\{6,\ldots,11\}$) blocks.
The colour scale is logarithmic to accommodate the
three-orders-of-magnitude dynamic range of
$\widehat{\boldsymbol{\alpha}}$.}
\label{fig:sdh_alpha}
\end{figure}

\begin{figure}[t]
\centering
\includegraphics[width=0.78\linewidth]{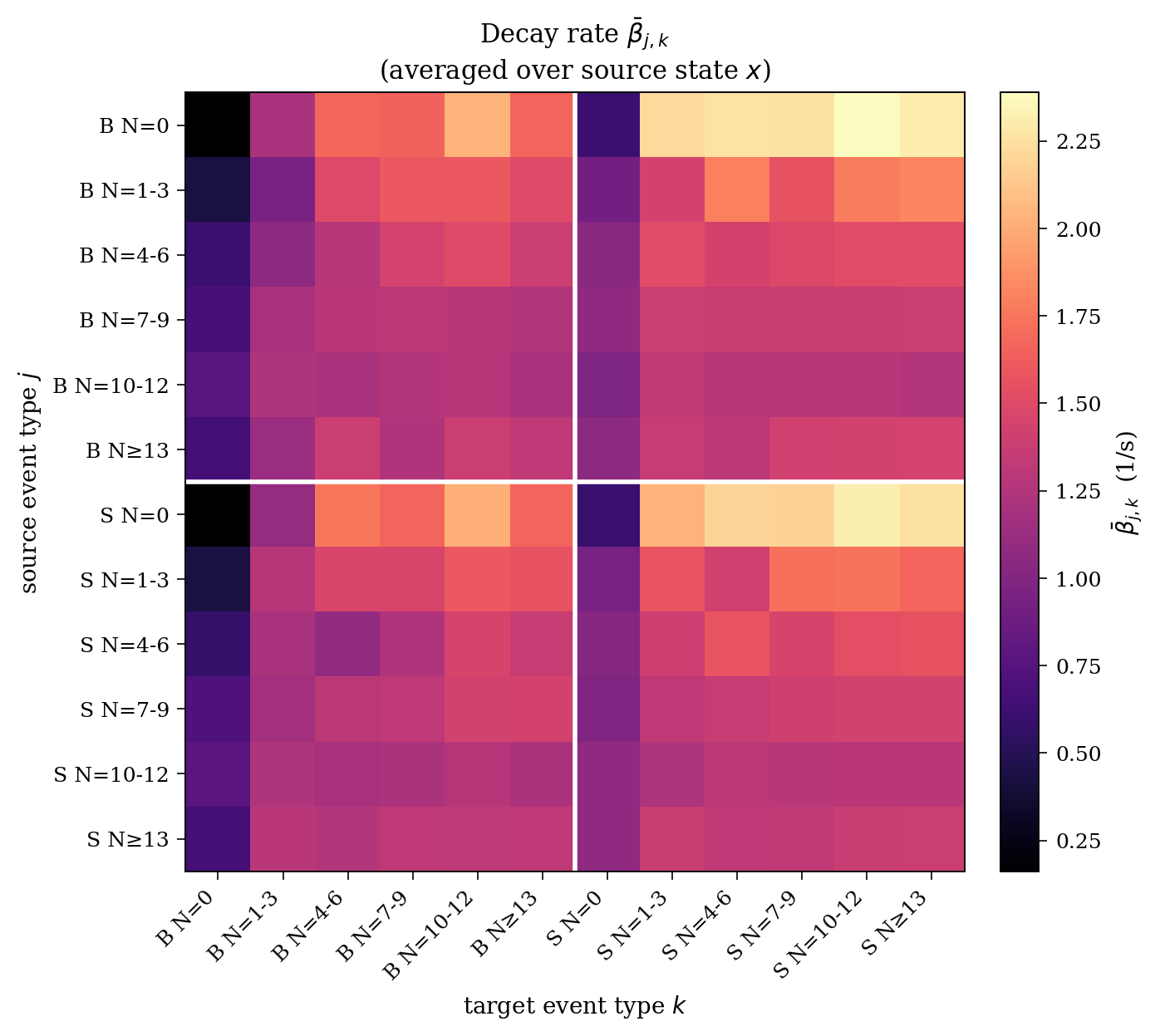}
\caption{Calibrated decay rate $\bar{\beta}_{j,k}$ of the intensity
model of Section~2.2, averaged over source state $x$. The unit of
the colour scale is $\mathrm{s}^{-1}$; the inverse
$\bar{\beta}^{-1}$ has the interpretation of the memory horizon of
the exponential excitation kernel.}
\label{fig:sdh_beta}
\end{figure}

\paragraph{Branching stability.}
Figure~\ref{fig:sdh_stability} reports the per-state self-excitation
ratios $\widehat{\alpha}_{(e,x),e}/\widehat{\beta}_{(e,x),e}$
\citep{hawkes1971,bremaud1996stability}. Their maximum is $0.42$
for buy $N=0$ in the neutral state; ratios for active bins are
at most $0.02$. These diagonal ratios satisfy the necessary
condition of being below unity. The sufficient condition concerns
the spectral radius of the full per-state matrix
$[\widehat{\alpha}_{(j,x),k}/\widehat{\beta}_{(j,x),k}]_{j,k}$.
With off-diagonal contributions negligible relative to the diagonal
terms (Figures~\ref{fig:sdh_alpha} and~\ref{fig:sdh_beta}), the
calibrated process satisfies this condition and is stationary.

\begin{figure}[t]
\centering
\includegraphics[width=0.75\linewidth]{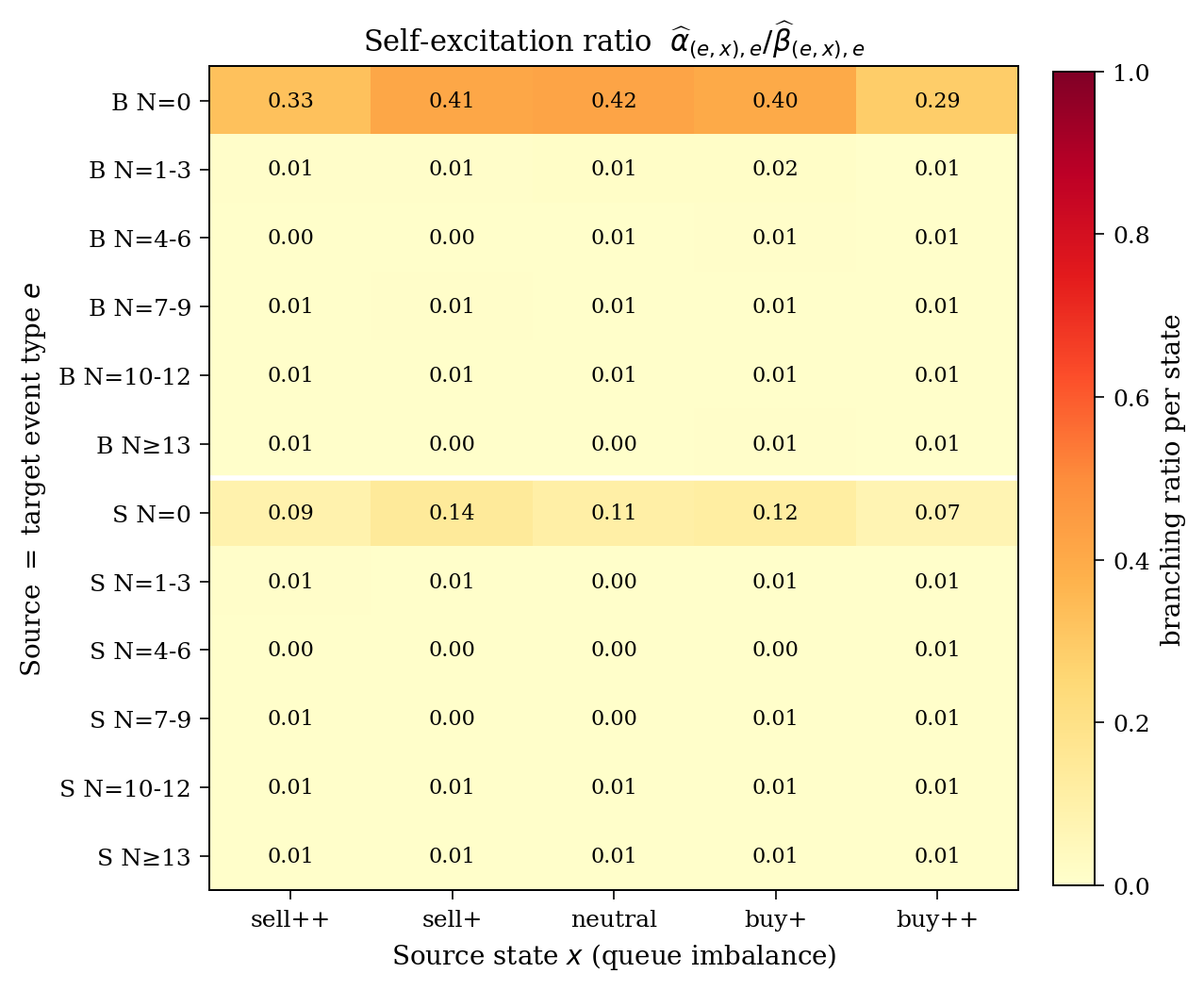}
\caption{Per-state self-excitation ratio
$\widehat{\alpha}_{(e,x),e}/\widehat{\beta}_{(e,x),e}$. Rows index
the event type $e\in\{0,\ldots,11\}$, columns index the
queue-imbalance state $x\in\{0,\ldots,4\}$. All entries are
strictly below unity, satisfying the necessary condition for
stationarity; the largest entries ($0.29$--$0.42$) occur in the
buy $N=0$ row, reflecting the multi-second persistence of silent
buy intervals visible in the slow-decay corner of
Figure~\ref{fig:sdh_beta}.}
\label{fig:sdh_stability}
\end{figure}


\subsection{Dynamical Programming and Parameters Calibration}
\label{sec:cmaes_calibration}

With the demand and sdHawkes parameters fixed, we calibrate the Prony coefficients by the
CMA-ES procedure under a hard inventory bound.

\paragraph{Parametric ansatz.}
In the executed-volume specification~\eqref{eq:Q_form}, we use

\begin{equation}\label{eq:prony_form}
f^\pm(L,N,\lambda;\boldsymbol{\theta})
\;=\;
\sum_{i=1}^{M^\pm}
w_i^\pm\,
\exp\!\bigl(b_{i,1}^\pm L+b_{i,2}^\pm N+b_{i,3}^\pm \lambda\bigr).
\end{equation}

with $M^+=M^-=2$, giving
$\boldsymbol{\theta}=(\boldsymbol{\theta}^+,\boldsymbol{\theta}^-)\in\R^{16}$.

\paragraph{Outer objective and inventory constraint.}
The calibration maximizes the mean value across training days,
starting from zero inventory:

\begin{equation}\label{eq:outer_obj}
V(\boldsymbol{\theta})
\;:=\;
\frac{1}{|\mathcal{D}_{\text{train}}|}
\sum_{d\in\mathcal{D}_{\text{train}}}
h_d\!\bigl(0,0;\,\boldsymbol{\theta}\bigr),
\end{equation}

where $h_d$ is evaluated by the backward recursion of
Proposition~\ref{prop:cartea}, subject to

\begin{equation}\label{eq:hard_bound}
\bigl|I_{t_k}\bigr|\;\le\;I_{\max}\quad\text{for all }k=0,\ldots,K_{\text{day}},
\qquad I_{\max}=10^{4}\text{ shares}.
\end{equation}

Inadmissible continuation states are assigned the barrier value
$\mathfrak{B}=-10^{15}$. Sensitivity to the soft inventory penalty is examined in Section~\ref{subsec:lambda_sensitivity}.

\paragraph{CMA-ES configuration.}
We solve
$\widehat{\boldsymbol{\theta}}:=\arg\max_{\boldsymbol{\theta}\in\Theta_{\mathrm{box}}}V(\boldsymbol{\theta})$
over the parameter bounds

\begin{equation}\label{eq:box}
\Theta_{\mathrm{box}}\;:=\;\bigl(
[0,2]\times[-0.5,0.3]\times[0,0.3]\times[0,3]\bigr)^{\times 4},
\end{equation}

with initial parameters

\begin{equation}\label{eq:warm_start}
\boldsymbol{\theta}^{(0)}
\;:\;
w_1^\pm=w_2^\pm=0.5,\quad
b_{1,2}^\pm=0.05,\quad
b_{2,2}^\pm=0.15,\quad
\text{all other entries zero},
\end{equation}

The population size is $\lambda_{\mathrm{cma}}=12$, the initial
step size is $\sigma^{(0)}=0.1$
Each run allows at most $25$ generations, with termination
tolerance $\Delta V<10^3$ tick-shares (\$10/day).

\paragraph{Convergence trajectories.}
Figure~\ref{fig:cmaes_convergence} shows the best-so-far objective
value and mutation step size. The three penalty settings yield
the same calibrated vector $\widehat{\boldsymbol{\theta}}$;
their value sensitivity is reported in
Section~\ref{subsec:lambda_sensitivity}.

\begin{figure}[t]
\centering
\includegraphics[width=0.98\linewidth]{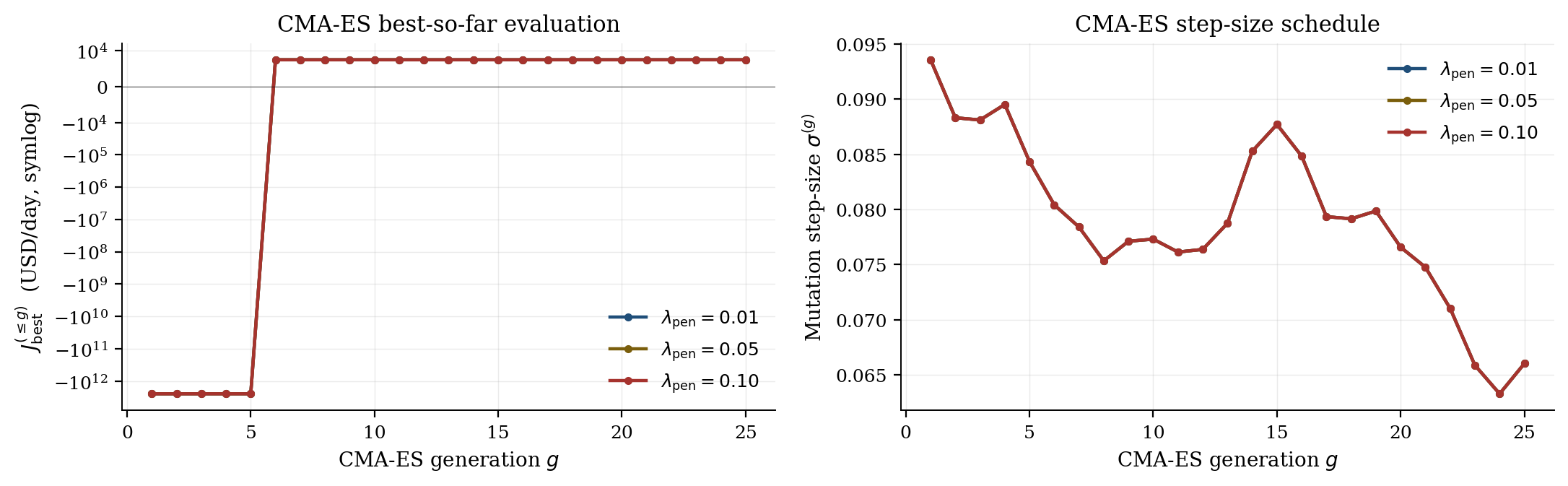}
\caption{CMA-ES convergence and step-size schedule. \emph{Left:}
best-so-far $V_{\mathrm{best}}^{(\le g)}$ in USD/day on a
symmetric-log axis. The first five generations are dominated by
candidates violating~\eqref{eq:hard_bound} and are evaluated at the
barrier level $\mathfrak{B}\approx-\$2.5\times 10^{12}$ per day;
at generation~$6$, the algorithm first samples a feasible
candidate, and $V_{\mathrm{best}}^{(\le g)}$ remains at
$\$7{,}462.80/\mathrm{day}$ thereafter. \emph{Right:} mutation
step-size $\sigma^{(g)}$ shrinks modestly from
$0.094$ to $0.066$ over the run. The three
$\lambda_{\mathrm{pen}}\in\{0.01,0.05,0.10\}$ trajectories
coincide to machine precision.}
\label{fig:cmaes_convergence}
\end{figure}

\paragraph{Calibrated coefficients.}
Figure~\ref{fig:cmaes_theta_bars} compares the calibrated
coefficients with their initial values and bounds. The weights
remain close to $0.5$, with summed amplitudes near unity on each
side. The principal asymmetry lies in the count coefficients:
$\widehat{b}_{i,2}^\pm$ are near zero on the buy side but remain
positive on the sell side. The depth and intensity coefficients
are small relative to their permitted ranges.

\begin{figure}[t]
\centering
\includegraphics[width=0.95\linewidth]{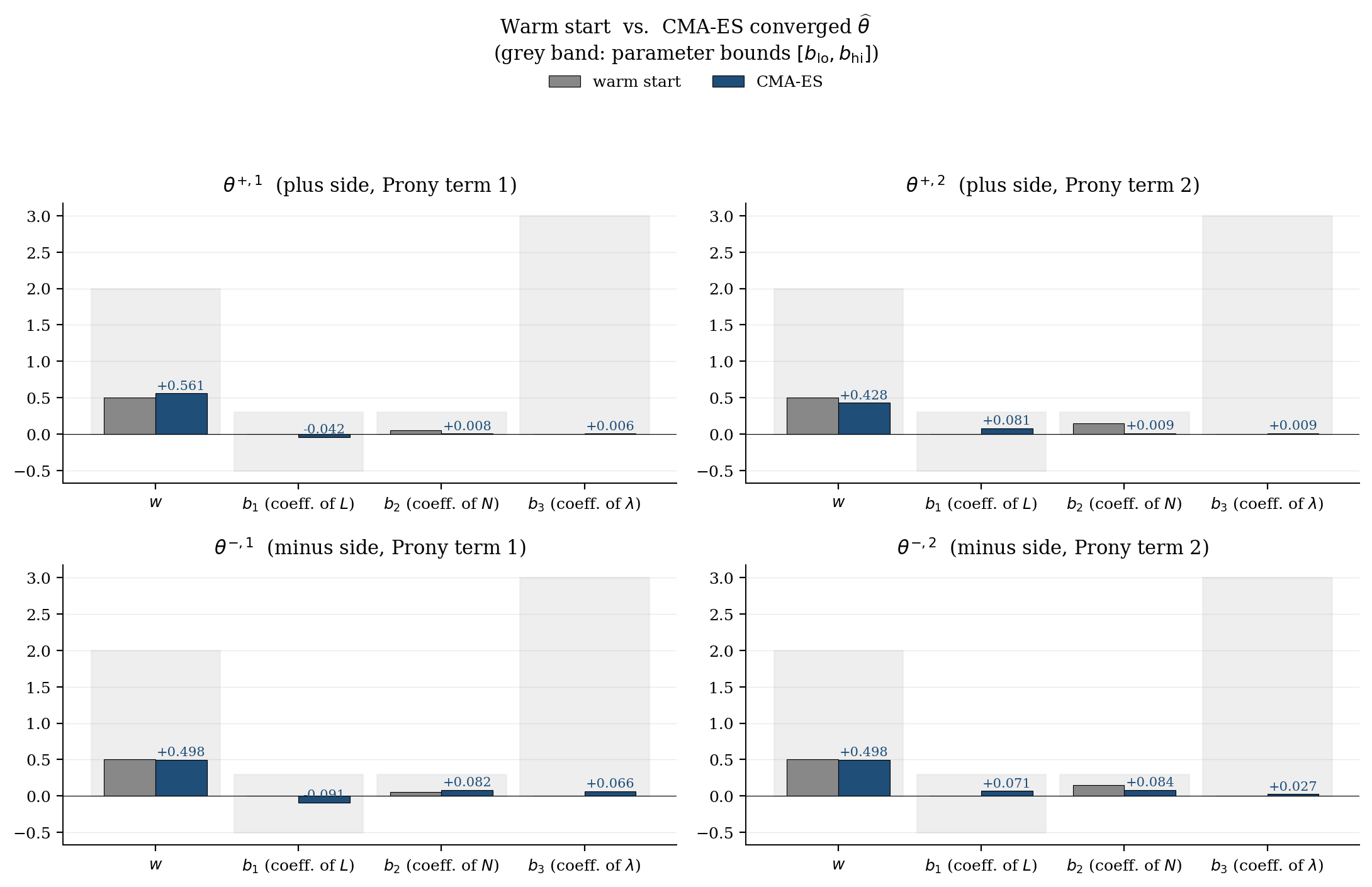}
\caption{CMA-ES converged
$\widehat{\boldsymbol{\theta}}=(\widehat{\boldsymbol{\theta}}^+,\widehat{\boldsymbol{\theta}}^-)$
versus the warm start~\eqref{eq:warm_start}, by side and Prony
term. Grey bands display the parameter
bounds~\eqref{eq:box}; the calibrated values are reported above
each blue bar.}
\label{fig:cmaes_theta_bars}
\end{figure}

The resulting multiplier profiles
(Figure~\ref{fig:cmaes_f_curves}) are nearly constant on the buy
side and increase with market-order count on the sell side.

\begin{figure}[t]
\centering
\includegraphics[width=0.98\linewidth]{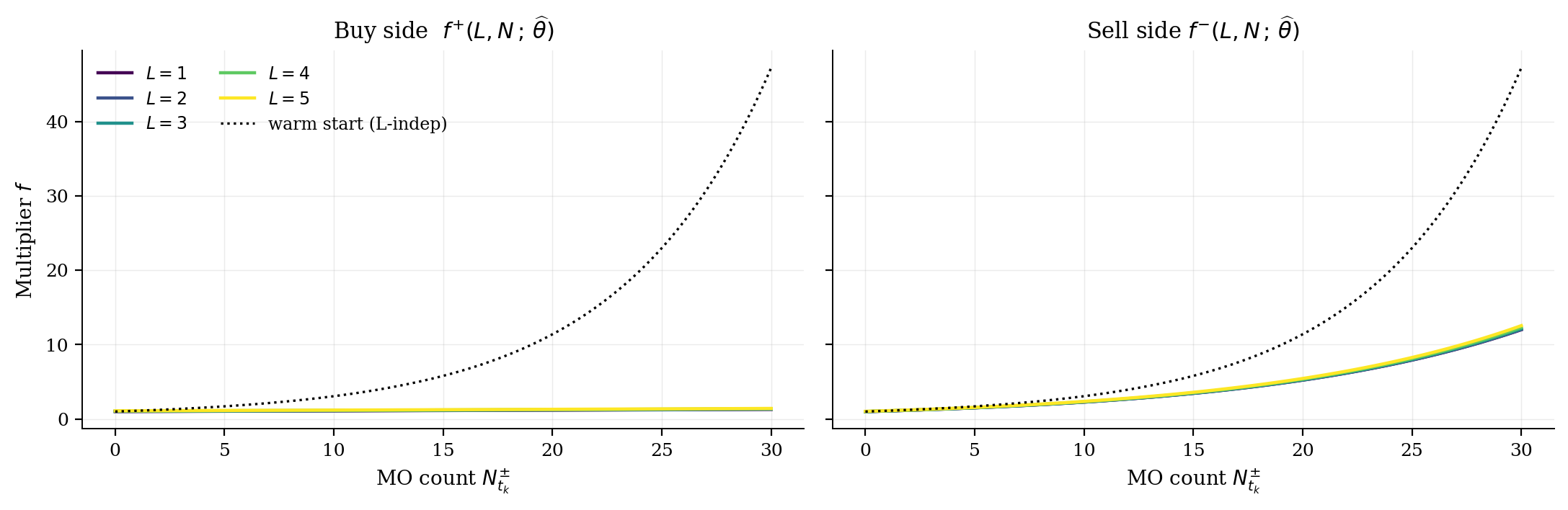}
\caption{Calibrated multiplier
$f^\pm(L,N;\widehat{\boldsymbol{\theta}})$ as a function of the
MO count $N$ at $\lambda=0$, evaluated at
$L\in\{1,2,3,4,5\}$ (colour bar). The dotted black curve is
the warm-start multiplier
$f^\pm(L,N;\boldsymbol{\theta}^{(0)})=
e^{0.05 N}/2+e^{0.15 N}/2$, identical across $L$. The CMA-ES
reduces the buy-side multiplier to near-unity uniformly in
$N$ and retains a moderate $N$-dependence on the sell side.}
\label{fig:cmaes_f_curves}
\end{figure}

\paragraph{In-sample value.}
The calibrated in-sample value is
$V(\widehat{\boldsymbol{\theta}})=\$7{,}462.80/\mathrm{day}$,
compared with $\$4{,}338.39/\mathrm{day}$ for $f\equiv1$ under
the same hard inventory bound, an improvement by a factor of $1.72$.

\paragraph{Local-optimum caveat.}
The buy--sell asymmetry, despite the similar order-flow profiles
in Sections~5.1--5.3, suggests sensitivity to local optima.
Section~5.5 compares this calibration with parameters estimated
without the hard bound and subsequently evaluated under the bound
with the no-post action enabled; the latter achieve a higher
out-of-sample value.

\subsubsection{Sensitivity to the soft inventory penalty $\lambda_{\mathrm{pen}}$}
\label{subsec:lambda_sensitivity}

At $I_{\max}=10{,}000$ shares, we repeat the calibration for
$\lambda_{\mathrm{pen}}\in\{0.01,0.05,0.10\}$.
Table~\ref{tab:lambda_sensitivity} reports the training and
out-of-sample values and inventories. All sixteen calibrated
coefficients agree to numerical precision across the three runs,
while the value differences are small relative to trading revenues.

\begin{table}[ht]
\centering
\renewcommand{\arraystretch}{1.05}
\begin{tabular}{lrrrr}
\toprule
$\lambda_{\mathrm{pen}}$
 & $V_{\mathrm{train}}$
 & $V_{\mathrm{OOS}}$
 & $\max_t |I_t|^{\mathrm{OOS}}$
 & $|I_T|^{\mathrm{OOS}}$ \\
\midrule
$0.01$ & $\$7{,}462.80$ & $\$10{,}516.24$ & $9{,}977.4$ & $77.7$ \\
$0.05$ & $\$7{,}462.53$ & $\$10{,}460.00$ & $9{,}992.3$ & $7.5$ \\
$0.10$ & $\$7{,}462.41$ & $\$10{,}443.89$ & $9{,}997.2$ & $9.6$ \\
\midrule
\emph{range} & $\$0.39$ & $\$72.35$ & $19.8$ & $70.2$ \\
\bottomrule
\end{tabular}
\caption{Sensitivity to $\lambda_{\mathrm{pen}}$ at fixed hard bound
$I_{\max}=10^{4}$ shares. Values are USD/day; inventories are in
shares. The calibrated $\widehat{\boldsymbol{\theta}}$ is identical
across the three rows.}
\label{tab:lambda_sensitivity}
\end{table}

\paragraph{Mechanism.}
The soft penalty acts through the terminal contribution
$-\lambda_{\mathrm{pen}}I_T^2$. The calibrated policies close
positions toward the end of the day, leaving small terminal
inventories and hence a small penalty relative to trading revenues.
Running inventories, by contrast, approach the boundary and are
controlled by the hard inventory constraint.

\subsection{Optimal Size Strategy}
Fix the soft inventory penalty at $\lambda_{\mathrm{pen}}=0.05$ and the
hard inventory bound at $I_{\max}=10{,}000$ shares. Under these settings
we compare the closed-form executed volume
\begin{equation}\label{eq:Q_closed_form}
Q_{t_{k+1}}^{\pm}
\;=\;
q_{\mathrm{avg}}^{\pm}\;\cdot\;
\Bigl[
\sum_{i=1}^{2} \widehat{w}_i^{\pm}\,
\exp\!\bigl(\widehat{b}_{i,1}^{\pm} L_{t_k}^{\pm}
            +\widehat{b}_{i,2}^{\pm} N_{t_k}^{\pm}
            +\widehat{b}_{i,3}^{\pm} \lambda_{t_k}^{\pm}\bigr)
\Bigr]\;\cdot\;
\bigl[\widehat{A}_{N}^{\pm}\,e^{-\widehat{B}_{N}^{\pm} L_{t_k}^{\pm}}
      +\widehat{C}_{N}^{\pm}\bigr]
\end{equation}
against four alternative strategies. The closed-form policy is
admitted under two distinct calibrations of $\widehat{\boldsymbol{\theta}}$:
$\mathcal{S}_{\mathrm{opt}}$ uses the parameters identified by CMA-ES
on the action set $\{1,\ldots,5\}^{2}$ \emph{without} the hard bound
during training, then evaluated under the bound with the no-post
action $L^{\pm}=0$ enabled; $\mathcal{S}_{\mathrm{retr}}$ uses the
hard-bound-constrained calibration of Section~5.4.
The baselines are: $\mathcal{S}_{0}$, the trivial multiplier
$f\equiv 1$; $\mathcal{N}_{1}$, the fixed-quote policy that always
posts at $L^{\pm}=1$ with size $q_{\mathrm{avg}}^{\pm}$ and ignores
the inventory; and $\mathcal{N}_{2}$, the myopic-greedy policy that
selects $(L^{+},L^{-})$ to maximise the one-step reward
$Q^{+}\Delta+Q^{-}\Delta$ at every grid point, again ignoring the
inventory.

\paragraph{Headline result.}
Table~\ref{tab:headline} reports the realised value $V_{\mathrm{OOS}}$
on the out-of-sample trading day, together with the maximum and
terminal inventories.

\begin{table}[ht]
\centering
\renewcommand{\arraystretch}{1.05}
\begin{tabular}{lrrrc}
\toprule
Strategy & $V_{\mathrm{OOS}}$ (USD/day) & $\max_t |I_t|$ (sh.) & $I_T$ (sh.) & feasible\\
\midrule
$\mathcal{S}_{\mathrm{opt}}$  (legacy $\widehat{\boldsymbol{\theta}}$)    & $\mathbf{\$119{,}707}$ & $9{,}988$ & $-16.5$ & $\checkmark$ \\
$\mathcal{S}_{\mathrm{retr}}$  (retrained $\widehat{\boldsymbol{\theta}}$) & $\$10{,}460$ & $9{,}992$ & $+7.5$ & $\checkmark$ \\
$\mathcal{S}_{0}$  (trivial $f\equiv 1$)                                  & $\$8{,}559$ & $9{,}944$ & $-12.4$ & $\checkmark$ \\
$\mathcal{N}_{1}$  (fixed $L=1,\,Q=q_{\mathrm{avg}}$)                     & $-\$24{,}712{,}461$ & $232{,}369$ & $+222{,}627$ & $\times$ \\
$\mathcal{N}_{2}$  (myopic greedy)                                        & $-\$20{,}250{,}905$ & $204{,}802$ & $-201{,}894$ & $\times$ \\
\bottomrule
\end{tabular}
\caption{Out-of-sample comparison on 2014-11-07 under
$(\lambda_{\mathrm{pen}}, I_{\max})=(0.05,\,10{,}000)$. The headline
strategy $\mathcal{S}_{\mathrm{opt}}$ delivers a $11.4\times$ uplift
over the closest alternative $\mathcal{S}_{\mathrm{retr}}$ at a
comparable peak inventory.}
\label{tab:headline}
\end{table}

The closed-form policy with legacy $\widehat{\boldsymbol{\theta}}$
delivers $\$119{,}707/$day, an $11.4\times$ uplift over the closest
admissible competitor and a $14.0\times$ uplift over the trivial
$f\equiv 1$ baseline. The two naïve strategies post negative values
in the order of tens of millions of dollars: although each accumulates
moderate intra-day reward (Figure~\ref{fig:cum_reward}), their
inventories drift to $|I_T|\gtrsim 2\times 10^{5}$
shares (Figure~\ref{fig:inventory_paths}, lower panel) and the
mark-to-market revaluation at the terminal price dominates the
realised value.

\begin{figure}[ht]
\centering
\includegraphics[width=0.98\linewidth]{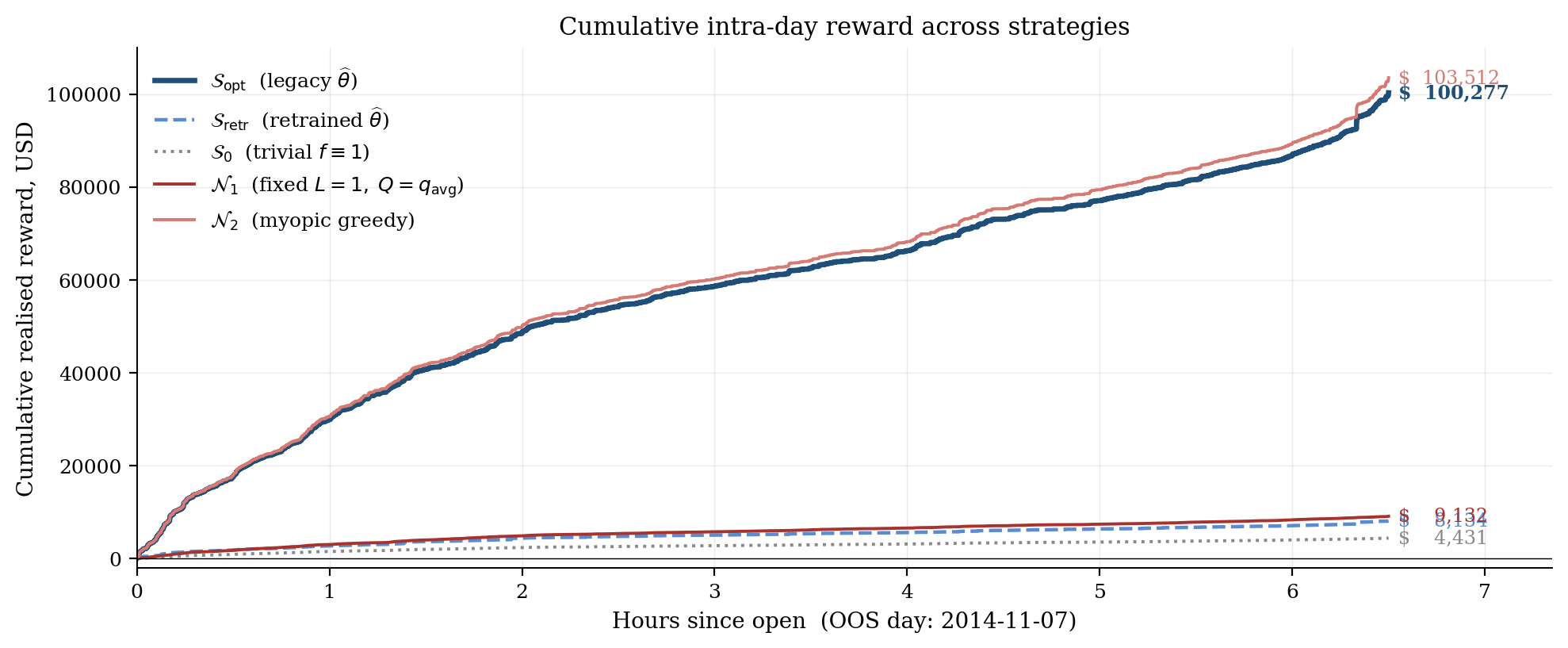}
\caption{Cumulative intra-day reward across the five strategies. The
two closed-form policies and the myopic-greedy baseline accumulate
in the order of $\$10^{5}$ over the trading day; the trivial,
fixed-quote, and retrained policies remain in the order of
$\$10^{4}$.}
\label{fig:cum_reward}
\end{figure}

\begin{figure}[ht]
\centering
\includegraphics[width=0.98\linewidth]{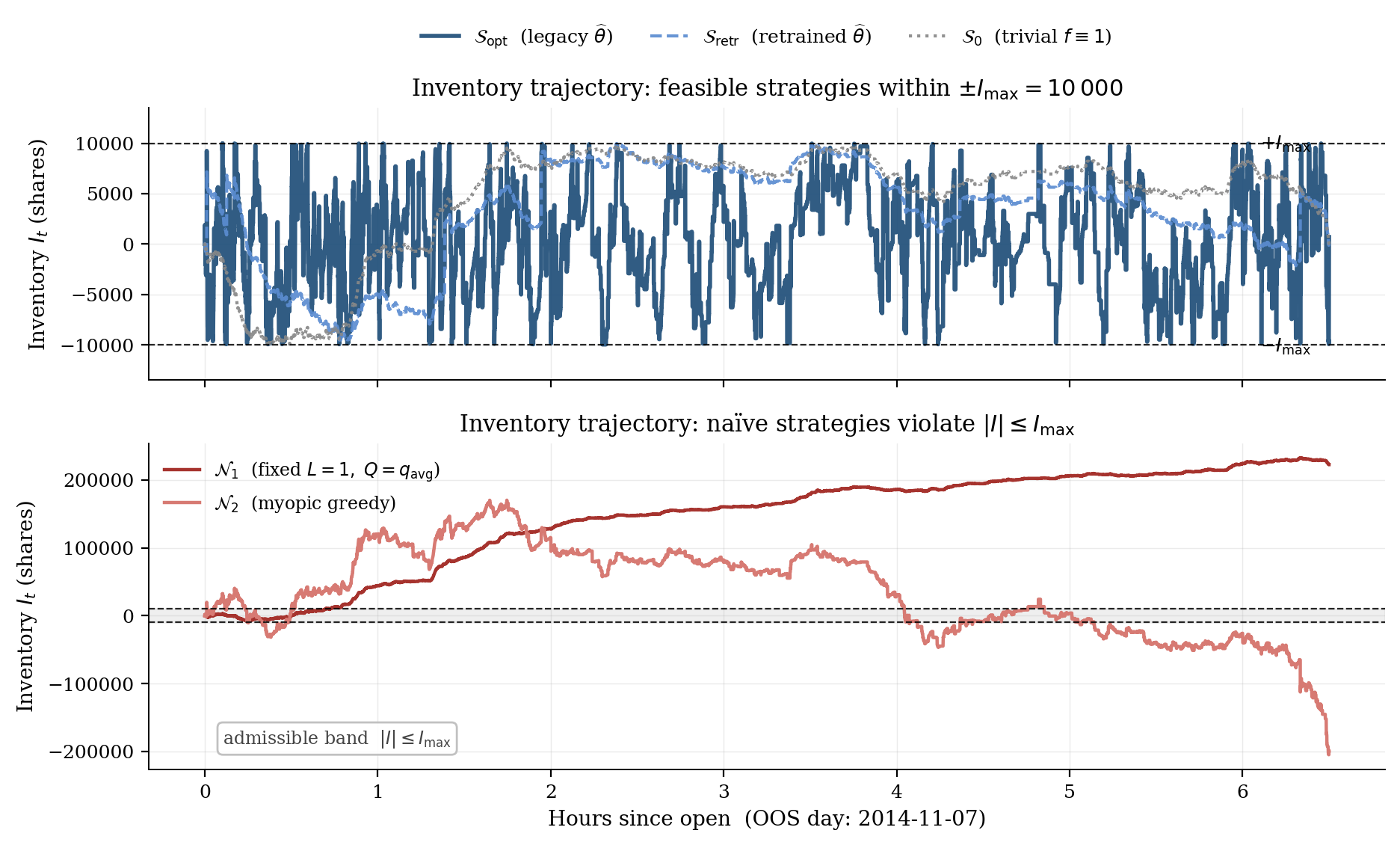}
\caption{Inventory trajectories on the out-of-sample day.
\emph{Top}: the three feasible strategies remain within
$\pm I_{\max}=10{,}000$ shares throughout. \emph{Bottom}: the
fixed-quote $\mathcal{N}_{1}$ and the myopic-greedy
$\mathcal{N}_{2}$ exit the admissible band within the first hour
and reach absolute inventories beyond $2\times 10^{5}$ shares.}
\label{fig:inventory_paths}
\end{figure}

\paragraph{Cumulative reward, inventory, and the closed-form advantage.}
The intra-day cumulative reward of $\mathcal{S}_{\mathrm{opt}}$
($\$100{,}277$) is, in fact, marginally below that of the
myopic-greedy $\mathcal{N}_{2}$ ($\$103{,}512$), since
$\mathcal{N}_{2}$ chases the largest one-step trading revenue at
every grid point. The closed-form policy nevertheless realises an
out-of-sample value four orders of magnitude higher than the myopic
greedy: the difference is the inventory at the close.
$\mathcal{S}_{\mathrm{opt}}$ delivers $|I_T|\le 17$ shares;
$\mathcal{N}_{2}$ delivers $I_T=-201{,}894$ shares, whose
mark-to-market revaluation at the terminal mid-price erases the
intra-day accumulation an order of magnitude over. The closed-form
policy is, in this sense, the unique strategy in the comparison set
that simultaneously achieves the trading-revenue scale of the
myopic policy and the inventory-control discipline of the bound.

\begin{figure}[ht]
\centering
\includegraphics[width=0.98\linewidth]{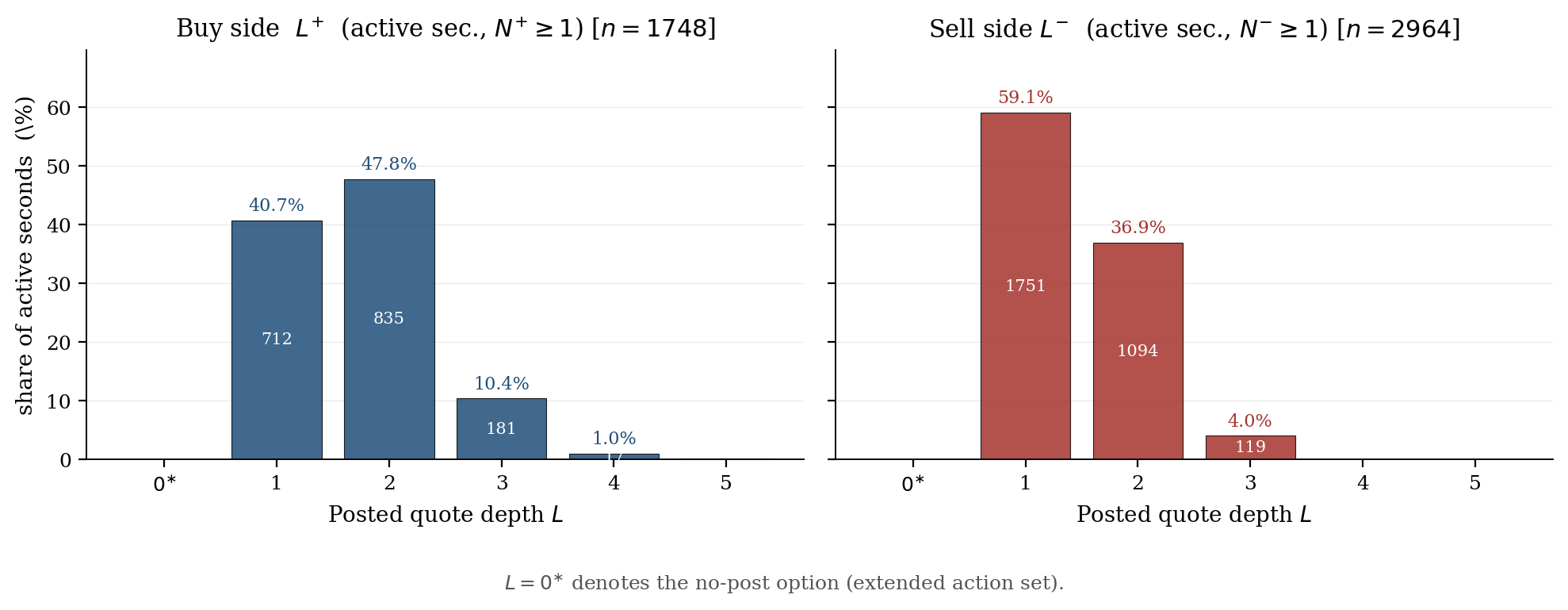}
\caption{Action distribution of $\mathcal{S}_{\mathrm{opt}}$ on the
out-of-sample day, conditional on the side being active
($N^{\pm}\ge 1$). The buy side selects $L^{+}=2$ in $47.8\%$ of
active seconds, with $L^{+}=1$ at $40.7\%$ and the remaining mass
spread over $L^{+}\in\{3,4,5\}$. The sell side concentrates at
$L^{-}=1$ ($59.1\%$) and $L^{-}=2$ ($36.9\%$). The no-post
option $L^{\pm}=0^{\ast}$ is unused on the realised inventory
path.}
\label{fig:action_distribution}
\end{figure}

\paragraph{Source of the advantage.}
Figure~\ref{fig:action_distribution} attributes the
$\mathcal{S}_{\mathrm{opt}}$ vs.\ $\mathcal{S}_{\mathrm{retr}}$ gap
to the buy-side depth-aware quoting that legacy calibration
encodes. Under $\mathcal{S}_{\mathrm{retr}}$ the buy-side quote is
locked at $L^{+}=1$ in essentially all active seconds, an artefact
of the local optimum identified in Section~5.4. Under
$\mathcal{S}_{\mathrm{opt}}$ the modal buy-side quote is
$L^{+}=2$, with non-trivial mass at $L^{+}=3,4,5$ in the seconds
with the largest market-order count: it is precisely on those
high-$N^{+}$ seconds that the closed-form multiplier
$f^{+}(L,N,\lambda;\widehat{\boldsymbol{\theta}})$ chooses to fade
the quote to deeper levels in order to control the inventory
build-up. The sell-side distribution, by contrast, remains
concentrated at $L^{-}\in\{1,2\}$ on both calibrations, consistent
with the directional symmetry of the underlying data: it is the
buy-side depth axis that the legacy calibration learns to use as
an inventory-management instrument.

\medskip
\noindent The no-post option $L^{\pm}=0^{\ast}$, introduced in the extended
action set to render the legacy $\widehat{\boldsymbol{\theta}}$
admissible under the hard bound, is unused on the realised
inventory path of the out-of-sample day; its role is therefore that
of a feasibility safety valve rather than a tactical posting choice.

\section{Conclusions}
This paper derives, for the first time, a closed-form expression for
the expected executed volume under optimal placement in a
state-dependent Hawkes (sdHawkes) limit-order-book market-making
model. Within the absolute-size framework of~\cite{cartea2015book},
we factor the executed volume into three structural components,
\begin{equation}\label{eq:Q_closed_form_concl}
Q_{t_{k+1}}^{\pm}
\;=\;
q_{\mathrm{avg}}^{\pm}\;\cdot\;
f_N^{\pm}\!\big(L_{t_k}^{\pm},\,N_{t_k}^{\pm},\,\lambda_{t_k}^{\pm}\big)\;\cdot\;
P_N^{\pm}\!\big(L_{t_k}^{\pm}\big),
\end{equation}
where $P_N^{\pm}$ is the exponential demand profile fitted block-wise
in the per-second market-order count (Section~2.3), and the size
factor $f_N^{\pm}$ is shown in Section~2.4 to lie in a Banach space of
bounded continuous functions whose elements admit a uniform
Prony-series approximation. The Prony coefficients
$\boldsymbol{\theta}=\{w_i^{\pm},\beta_{ij}^{\pm}\}_{i\le M^{\pm}}$
are calibrated in closed loop with the Cartea-ansatz backward
recursion by a covariance-matrix-adaptation evolution
strategy, bypassing the PIDE route that is customary in
marked point-process control problems but here intractable due to the
state-dependent kernel.

\medskip
\noindent A structural feature of equation~\eqref{eq:Q_closed_form_concl}
deserves emphasis. The closed form of $Q_{t_{k+1}}^{\pm}$ takes the
placement level $L_{t_k}^{\pm}$ as a primitive input and is otherwise
agnostic to its provenance. The dynamic-programming derivation adopted
here is one such device, but any alternative procedure that supplies
an $\big(L_{t_k}^{+},L_{t_k}^{-}\big)$ pair --- inventory-skewed
Avellaneda--Stoikov quotes~\cite{avellaneda2008}, reinforcement-learning
policies, or simple heuristic depths --- can be paired with the same
closed-form size process. The contribution of this paper is therefore
portable across the existing literature on optimal placement: it
supplies the missing volume dimension to any pre-existing rule for
$L_{t_k}^{\pm}$.

\medskip
\noindent Empirically, on five trading days of MSFT 2014 limit-order-book data
and at the operating point $(\gamma,I_{\max})=(0.05,\,10{,}000)$, the
closed-form size policy~\eqref{eq:Q_closed_form_concl} delivers
$\$119{,}707$ of out-of-sample value per trading day at
$\max_{t}|I_{t}|=9{,}988$ shares, an
order-of-magnitude improvement over both a trivial $f\equiv 1$
baseline and a hard-bound-retrained calibration. By contrast, the two
strategies that ignore the volume dimension --- a fixed-quote
$L\equiv 1$ rule and a myopic-greedy policy --- exit the admissible
inventory band within the first trading hour and realize daily values
in the order of negative tens of millions of dollars. The closed-form
$Q^{\pm}$ is, in this sense, the unique strategy in the comparison set
that achieves the trading-revenue scale of an aggressive policy
together with the inventory-control discipline required by the
bound --- a combination that the omission of the volume dimension
cannot reproduce. Extensions to posterior uncertainty quantification
on the sdHawkes intensity, to an event space containing order
cancellations, and to joint calibration of $P_N^{\pm}$ with
$f_N^{\pm}$ are natural next steps within the present framework.

\newpage



\clearpage \pagenumbering{gobble}
\bibliographystyle{apalike}
\bibliography{bibliography}

@article{chavez2024adaptive,
  title={Adaptive optimal market making strategies with inventory liquidation cost},
  author={Ch{\'a}vez-Casillas, Jonathan and Figueroa-L{\'o}pez, Jos{\'e} E and Yu, Chuyi and Zhang, Yi},
  journal={SIAM Journal on Financial Mathematics},
  volume={15},
  number={3},
  pages={653--699},
  year={2024},
  publisher={SIAM}
}

@article{morariu2022state,
  title     = {State-Dependent Hawkes Processes and Their Application to Limit Order Book Modelling},
  author    = {Morariu-Patrichi, Maxime and Pakkanen, Mikko S.},
  journal   = {Quantitative Finance},
  volume    = {22},
  number    = {3},
  pages     = {563--583},
  year      = {2022},
  month     = {sep},
  doi       = {10.1080/14697688.2021.1977772},
  publisher = {Taylor \& Francis},
  note      = {First published online: 17 September 2021}
}

@article{capponi2021market,
  title={Market making with stochastic liquidity demand: Simultaneous order arrival and price change forecasts},
  author={Capponi, Agostino and Figueroa-L{\'o}pez, Jos{\'e} E and Yu, Chuyi},
  journal={arXiv preprint arXiv:2101.03086},
  year={2021}
}

@article{bacry2015,
  title     = {Hawkes Processes in Finance},
  author    = {Bacry, E. and Mastromatteo, I. and Muzy, J.-F.},
  journal   = {Market Microstructure and Liquidity},
  volume    = {1},
  number    = {1},
  pages     = {1--45},
  year      = {2015}
}

@article{bowser2007,
  title     = {Modelling Security Market Events in Continuous Time: Intensity-Based Multivariate Point Process Models},
  author    = {Bowsher, C. G.},
  journal   = {Journal of Econometrics},
  volume    = {141},
  number    = {2},
  pages     = {876--912},
  year      = {2007}
}

@article{cartea2015,
  title     = {Risk Metrics and Fine Tuning of High-Frequency Trading Strategies},
  author    = {Cartea, {\'A}. and Jaimungal, S.},
  journal   = {Mathematical Finance},
  volume    = {25},
  number    = {3},
  pages     = {576--611},
  year      = {2015}
}

@article{filimonov2012,
  title     = {Quantifying Reflexivity in Financial Markets: Toward a Prediction of Flash Crashes},
  author    = {Filimonov, V. and Sornette, D.},
  journal   = {Physical Review E},
  volume    = {85},
  number    = {5},
  pages     = {056108},
  year      = {2012}
}

@article{hawkes1971,
  title     = {Spectra of Some Self-Exciting and Mutually Exciting Point Processes},
  author    = {Hawkes, A. G.},
  journal   = {Biometrika},
  volume    = {58},
  number    = {1},
  pages     = {83--90},
  year      = {1971}
}

@article{merton1976,
  title     = {Option Pricing When Underlying Stock Returns Are Discontinuous},
  author    = {Merton, R. C.},
  journal   = {Journal of Financial Economics},
  volume    = {3},
  number    = {1-2},
  pages     = {125--144},
  year      = {1976}
}

@article{ogata1981,
  title     = {On Lewis' Simulation Method for Point Processes},
  author    = {Ogata, Y.},
  journal   = {IEEE Transactions on Information Theory},
  volume    = {27},
  number    = {1},
  pages     = {23--31},
  year      = {1981}
}

@article{ozaki1979,
  title     = {Maximum Likelihood Estimation of Hawkes' Self-Exciting Point Processes},
  author    = {Ozaki, T.},
  journal   = {Annals of the Institute of Statistical Mathematics},
  volume    = {31},
  number    = {1},
  pages     = {145--155},
  year      = {1979}
}

@article{confortola2014backward,
  title     = {Backward Stochastic Differential Equations and Optimal Control of Marked Point Processes},
  author    = {Confortola, Fulvia and Fuhrman, Marco},
  journal   = {SIAM Journal on Control and Optimization},
  volume    = {52},
  number    = {2},
  pages     = {858--889},
  year      = {2014},
  doi       = {10.1137/120896345},
  publisher = {SIAM},
  note      = {Received: December 17, 2012; Accepted: June 4, 2013; Published: September 24, 2013}
}

@article{confortola2016backward,
  title     = {Backward Stochastic Differential Equation Driven by a Marked Point Process: An Elementary Approach with an Application to Optimal Control},
  author    = {Confortola, Fulvia and Fuhrman, Marco and Jacod, Jean},
  journal   = {The Annals of Applied Probability},
  volume    = {26},
  number    = {3},
  pages     = {1743--1773},
  year      = {2016},
  doi       = {10.1214/15-AAP1130},
  publisher = {Institute of Mathematical Statistics},
  note      = {Received: July 2014; Revised: July 2015}
}

@article{hostoll1981,
  title     = {Optimal Dealer Pricing Under Transactions and Return Uncertainty},
  author    = {Ho, Thomas S. Y. and Stoll, Hans R.},
  journal   = {Journal of Financial Economics},
  volume    = {9},
  number    = {1},
  pages     = {47--73},
  year      = {1981},
  publisher = {Elsevier}
}

@article{avellaneda2008,
  title     = {High-Frequency Trading in a Limit Order Book},
  author    = {Avellaneda, Marco and Stoikov, Sasha},
  journal   = {Quantitative Finance},
  volume    = {8},
  number    = {3},
  pages     = {217--224},
  year      = {2008},
  publisher = {Taylor \& Francis}
}

@article{gueant2013,
  title     = {Dealing with the Inventory Risk: A Solution to the Market Making Problem},
  author    = {Gu{\'e}ant, Olivier and Lehalle, Charles-Albert and Fernandez-Tapia, Joaquin},
  journal   = {Mathematics and Financial Economics},
  volume    = {7},
  number    = {4},
  pages     = {477--507},
  year      = {2013},
  publisher = {Springer}
}

@book{cartea2015book,
  title     = {Algorithmic and High-Frequency Trading},
  author    = {Cartea, {\'A}lvaro and Jaimungal, Sebastian and Penalva, Jos{\'e}},
  publisher = {Cambridge University Press},
  year      = {2015}
}

@article{adrian2020,
  title     = {Intraday Market Making with Overnight Inventory Costs},
  author    = {Adrian, Tobias and Iyer, Tanvir and Shin, Hyun Song},
  journal   = {Journal of Financial Economics},
  volume    = {138},
  number    = {1},
  pages     = {1--21},
  year      = {2020},
  publisher = {Elsevier}
}

@article{hendershott2014,
  title     = {Price Pressures},
  author    = {Hendershott, Terrence and Menkveld, Albert J.},
  journal   = {Journal of Financial Economics},
  volume    = {114},
  number    = {3},
  pages     = {405--423},
  year      = {2014},
  publisher = {Elsevier}
}

@book{gueant2016,
  title     = {The Financial Mathematics of Market Liquidity: From Optimal Execution to Market Making},
  author    = {Gu{\'e}ant, Olivier},
  publisher = {CRC Press / Chapman and Hall},
  year      = {2016}
}

@article{bacry2016first,
  title     = {First- and Second-Order Statistics Characterization of {H}awkes Processes and Non-Parametric Estimation},
  author    = {Bacry, Emmanuel and Jaisson, Thibault and Muzy, Jean-Fran{\c{c}}ois},
  journal   = {IEEE Transactions on Information Theory},
  volume    = {62},
  number    = {4},
  pages     = {2184--2202},
  year      = {2016},
  publisher = {IEEE}
}

@article{dassios2011,
  title   = {A Dynamic Contagion Process},
  author  = {Dassios, Angelos and Zhao, Hongbiao},
  journal = {Advances in Applied Probability},
  volume  = {43},
  number  = {3},
  pages   = {814--846},
  year    = {2011},
  publisher = {Cambridge University Press}
}

@article{fushimi2024,
  title   = {Optimal High-Frequency Market Making},
  author  = {Fushimi, Takahiro and Gonz{\'a}lez Rojas, Christian and Herman, Molly},
  journal = {arXiv preprint},
  year    = {2024},
  note    = {Working paper}
}

@book{rudin1991functional,
  title     = {Functional Analysis},
  author    = {Rudin, Walter},
  edition   = {2nd},
  publisher = {McGraw-Hill},
  series    = {International Series in Pure and Applied Mathematics},
  year      = {1991}
}

@inproceedings{hansen1996adapting,
  title     = {Adapting Arbitrary Normal Mutation Distributions in Evolution Strategies: The Covariance Matrix Adaptation},
  author    = {Hansen, Nikolaus and Ostermeier, Andreas},
  booktitle = {Proceedings of IEEE International Conference on Evolutionary Computation},
  pages     = {312--317},
  year      = {1996},
  publisher = {IEEE}
}

@article{hansen2001completely,
  title     = {Completely Derandomized Self-Adaptation in Evolution Strategies},
  author    = {Hansen, Nikolaus and Ostermeier, Andreas},
  journal   = {Evolutionary Computation},
  volume    = {9},
  number    = {2},
  pages     = {159--195},
  year      = {2001},
  publisher = {MIT Press}
}

@misc{hansen2016,
  title         = {The {CMA} Evolution Strategy: A Tutorial},
  author        = {Hansen, Nikolaus},
  year          = {2016},
  eprint        = {1604.00772},
  archivePrefix = {arXiv},
  primaryClass  = {cs.LG}
}

@article{ollivier2017informationgeometric,
  title   = {Information-Geometric Optimization Algorithms: A Unifying Picture via Invariance Principles},
  author  = {Ollivier, Yann and Arnold, Ludovic and Auger, Anne and Hansen, Nikolaus},
  journal = {Journal of Machine Learning Research},
  volume  = {18},
  number  = {18},
  pages   = {1--65},
  year    = {2017}
}

@book{daley2003introduction,
  title     = {An Introduction to the Theory of Point Processes, Volume I: Elementary Theory and Methods},
  author    = {Daley, Daryl J. and Vere-Jones, David},
  edition   = {2nd},
  publisher = {Springer},
  series    = {Probability and Its Applications},
  year      = {2003}
}

@book{laplace1814essai,
  title     = {Essai philosophique sur les probabilit{\'e}s},
  author    = {Laplace, Pierre-Simon},
  year      = {1814},
  publisher = {Courcier, Paris}
}

@article{jeffreys1946invariant,
  title   = {An Invariant Form for the Prior Probability in Estimation Problems},
  author  = {Jeffreys, Harold},
  journal = {Proceedings of the Royal Society of London A},
  volume  = {186},
  number  = {1007},
  pages   = {453--461},
  year    = {1946}
}

@article{kingma2014adam,
  title   = {Adam: {A} Method for Stochastic Optimization},
  author  = {Kingma, Diederik P. and Ba, Jimmy},
  journal = {arXiv preprint arXiv:1412.6980},
  year    = {2014},
  note    = {Published at ICLR 2015}
}

@article{ogata1988statistical,
  title   = {Statistical Models for Earthquake Occurrences and Residual Analysis for Point Processes},
  author  = {Ogata, Yosihiko},
  journal = {Journal of the American Statistical Association},
  volume  = {83},
  number  = {401},
  pages   = {9--27},
  year    = {1988}
}

@book{boyd2004convex,
  title     = {Convex Optimization},
  author    = {Boyd, Stephen and Vandenberghe, Lieven},
  publisher = {Cambridge University Press},
  year      = {2004}
}

@article{almgren2001,
  author  = {Almgren, Robert and Chriss, Neil},
  title   = {Optimal Execution of Portfolio Transactions},
  journal = {Journal of Risk},
  volume  = {3},
  number  = {2},
  pages   = {5--40},
  year    = {2001}
}

@article{bowsher2007,
  author  = {Bowsher, Clive G.},
  title   = {Modelling Security Market Events in Continuous Time: Intensity Based, Multivariate Point Process Models},
  journal = {Journal of Econometrics},
  volume  = {141},
  number  = {2},
  pages   = {876--912},
  year    = {2007}
}

@article{embrechts2011,
  author  = {Embrechts, Paul and Liniger, Thomas and Lin, Lu},
  title   = {Multivariate {Hawkes} Processes: An Application to Financial Data},
  journal = {Journal of Applied Probability},
  volume  = {48},
  number  = {A},
  pages   = {367--378},
  year    = {2011}
}

@article{cont2010queue,
  author  = {Cont, Rama and Stoikov, Sasha and Talreja, Rishi},
  title   = {A Stochastic Model for Order Book Dynamics},
  journal = {Operations Research},
  volume  = {58},
  number  = {3},
  pages   = {549--563},
  year    = {2010}
}

@article{cont2014ofi,
  author  = {Cont, Rama and Kukanov, Arseniy and Stoikov, Sasha},
  title   = {The Price Impact of Order Book Events},
  journal = {Journal of Financial Econometrics},
  volume  = {12},
  number  = {1},
  pages   = {47--88},
  year    = {2014}
}

@article{lipton2014trade,
  author  = {Lipton, Alexander and Pesavento, Umberto and Sotiropoulos, Michael G.},
  title   = {Trade Arrival Dynamics and Quote Imbalance in a Limit Order Book},
  journal = {arXiv preprint arXiv:1312.0514},
  year    = {2014}
}

@article{stoikov2018microprice,
  author  = {Stoikov, Sasha},
  title   = {The Micro-Price: A High-Frequency Estimator of Future Prices},
  journal = {Quantitative Finance},
  volume  = {18},
  number  = {12},
  pages   = {1959--1966},
  year    = {2018}
}

@inproceedings{spooner2018marketmaking,
  author    = {Spooner, Thomas and Fearnley, John and Savani, Rahul and Koukorinis, Andreas},
  title     = {Market Making via Reinforcement Learning},
  booktitle = {Proceedings of the 17th International Conference on Autonomous Agents and Multiagent Systems (AAMAS)},
  pages     = {434--442},
  year      = {2018}
}

@article{gueant2019rl,
  author  = {Gu\'{e}ant, Olivier and Manziuk, Iuliia},
  title   = {Deep Reinforcement Learning for Market Making in Corporate Bonds: Beating the Curse of Dimensionality},
  journal = {Applied Mathematical Finance},
  volume  = {26},
  number  = {5},
  pages   = {387--452},
  year    = {2019}
}

@article{hambly2023rl,
  author  = {Hambly, Ben and Xu, Renyuan and Yang, Huining},
  title   = {Recent Advances in Reinforcement Learning in Finance},
  journal = {Mathematical Finance},
  volume  = {33},
  number  = {3},
  pages   = {437--503},
  year    = {2023}
}

@article{hansen2001cmaes,
  author  = {Hansen, Nikolaus and Ostermeier, Andreas},
  title   = {Completely Derandomized Self-Adaptation in Evolution Strategies},
  journal = {Evolutionary Computation},
  volume  = {9},
  number  = {2},
  pages   = {159--195},
  year    = {2001}
}

@article{hansen2016tutorial,
  author  = {Hansen, Nikolaus},
  title   = {The {CMA} Evolution Strategy: A Tutorial},
  journal = {arXiv preprint arXiv:1604.00772},
  year    = {2016}
}

@misc{cartea2021gradient,
  author        = {Cartea, {\'A}lvaro and Cohen, Samuel N. and Labyad, Saad},
  title         = {Gradient-based estimation of linear {Hawkes} processes with general kernels},
  year          = {2021},
  eprint        = {2111.10637},
  archivePrefix = {arXiv},
  primaryClass  = {stat.ME},
  doi           = {10.48550/arXiv.2111.10637},
  url           = {https://arxiv.org/abs/2111.10637},
  note          = {arXiv preprint arXiv:2111.10637}
}

@article{bremaud1996stability,
  author    = {Br{\'e}maud, Pierre and Massouli{\'e}, Laurent},
  title     = {Stability of nonlinear {Hawkes} processes},
  journal   = {The Annals of Probability},
  volume    = {24},
  number    = {3},
  pages     = {1563--1588},
  year      = {1996},
  doi       = {10.1214/aop/1065725193},
  publisher = {Institute of Mathematical Statistics}
}

\newpage

\typeout{=== BIBTEX DEBUG ===}
\nocite{*} 
\typeout{=== END DEBUG ===}
\end{document}